\documentclass[journal]{IEEEtran}
\usepackage{amssymb}
\usepackage{amsfonts}
\usepackage{amsmath}
\usepackage{algorithm}
\usepackage{graphicx,cite,multicol,multirow,diagbox,booktabs,array}
\usepackage{subfig}
\usepackage{verbatim}
\usepackage{cuted}
\usepackage{float}
\usepackage{bm}
\captionsetup[table]{name={TABLE},labelsep=newline,textfont=sc}
\captionsetup[figure]{name={Fig.},labelsep=period}
\newtheorem{theorem}{Theorem}
\newtheorem{remark}{Remark}
\newtheorem{proof}{Proof}

\newtheorem{corollary}{Corollary}

\usepackage{etoolbox}
\pretocmd{\maketitle}{%
  \markboth{The definitive version was published in  in IEEE Transactions on Communications, vol. 70, no. 1, pp. 606-620, Jan. 2022, doi: 10.1109/TCOMM.2021.3120721.}{}%
}{}{}

\ifCLASSINFOpdf
\else
\fi

\hyphenation{op-tical net-works semi-conduc-tor}

\begin{document}
\title{Joint Optimization for RIS-Assisted Wireless Communications: From Physical and Electromagnetic Perspectives}
\author{Xin Cheng,~Yan Lin,~Weiping Shi,~Jiayu Li,~Cunhua Pan,~Feng Shu, \\
Yongpeng Wu,~and~Jiangzhou~Wang,~\IEEEmembership{Fellow,~IEEE}
\thanks{Xin Cheng,~Yan Lin,~Weiping Shi,~and~Jiayu Li are with the School of Electronic and Optical Engineering, Nanjing University of Science and Technology, Nanjing, 210094, China. (e-mail: xincstar23@163.com). }
\thanks{Cunhua Pan is with the School of Electronic Engineering and Computer Science , Queen Mary University of London, Mile End Road
London E1 4NS, U.K.}
\thanks{Feng Shu is with the School of Information and Communication Engineering, Hainan University, Haikou 570228, China. and also with the School of Electronic and Optical Engineering, Nanjing University of Science and
Technology, Nanjing 210094, China.  }
\thanks{Yongpeng Wu is with the Shanghai Key Laboratory of Navigation and
Location Based Services, Shanghai Jiao Tong University, Minhang 200240,
China. }
\thanks{Jiangzhou Wang is with the School of Engineering and Digital Arts, University of Kent, Canterbury CT2 7NT, U.K.}}
\maketitle
\begin{abstract}
Reconfigurable intelligent  surfaces (RISs) are envisioned to be a disruptive wireless communication technique that is capable of reconfiguring the wireless propagation environment. In this paper,  we study a free-space RIS-assisted multiple-input single-output (MISO) communication system in far-field operation. To maximize the received power  from the physical and electromagnetic nature point of view, a comprehensive optimization, including beamforming of the transmitter, phase shifts of the RIS, orientation and position of the RIS is formulated and addressed.  After exploiting the property of line-of-sight (LoS) links, we derive closed-form solutions of beamforming and phase shifts.  For the non-trivial RIS position optimization problem in arbitrary three-dimensional space, a dimensional-reducing theory  is proved. The simulation results show that the proposed closed-form beamforming and phase shifts approach the upper bound of the received power. The robustness of our proposed solutions in terms of the perturbation is also verified. Moreover, the RIS significantly enhances the performance of the  mmWave/THz  communication system.
\end{abstract}

\begin{IEEEkeywords}
Reconfigurable intelligent  surface, intelligent reflecting surface, far-field, closed-form beamforming and phase shifts, position optimization, millimeter wave communication.
\end{IEEEkeywords}

\IEEEpeerreviewmaketitle

\section{Introduction}
\IEEEPARstart{I}{n} recent years, wireless communication has witnessed great success in various aspects such as rate, stability and security. However, most of the existing techniques mainly rely on the transceiver design at both the transmitter and the receiver. The wireless propagation environment is left untouched. Unfortunately, the propagation loss and multi-path fading deteriorate the communication performance. Due to the rapid development of radio frequency (RF), micro electromechanical systems (MEMS) and metamaterial, a metasurface called reconfigurable  intelligent   surface (RIS) \cite{CuiCoding,HuanhuanA} has attracted a lot of attention. In \cite{LiaskosA}, a new wireless communication paradigm based on the concept of RIS was proposed, which can adaptively tune the propagation environment. The benefits and challenges were discussed in \cite{8910627}.

A RIS is composed of an array of  low-cost passive reflective elements, each of which can be controlled by a control loop to re-engineer the electromagnetic wave (EM) including steering towards any desired direction full absorption, polarization manipulation. The EM programmed by many reflective elements can be integrated constructively to induce remarkable effect. Unlike relay which requires active radio frequency (RF) chains, the RIS is passive because it dose not adopt any active transmit module (e.g., power amplifier)\cite{9119122}. Hence, the RIS is more energy efficient than the relay scheme.

Due to the above appealing features,  RIS-assisted communication systems have been studied extensively. For example, owing to  low-cost and passive reflecting elements,  RIS can achieve high spectrum and energy efficiency for future wireless networks \cite{9357969}. The joint design of beamforming and phase shifts was also investigated in various communication scenarios.  The contributions in \cite{DBLP:conf/globecom/WuZ18,DBLP:conf/iccchina/YuXS19,shi2019enhanced,8746155} showed that the RIS offers performance
improvement and coverage enhancement in  the single-user multiple-input single-output (MISO) system.  In downlink multi-user MISO case, the advantages of introducing RISs in enhancing the cell-edge user performance  were confirmed in \cite{Pan1}. Employing RISs to wireless information and power transfer (SWIPT) system in multi-user MIMO scenarios  was shown to beneficially enhance the system performance in terms of both the link quantity and the harvested power\cite{Pan2}. The advantages of introducing RIS were demonstrated in a secure multigroup multicast MISO communication system in \cite{9384498}. To minimize the symbol error rate (MSER) of an RIS-assisted  point-to-point multi-data-stream MIMO wireless communication system, the reflective elements at the RIS and the precoder at the transmitter were alternately optimized in \cite{9097454}.  With the assistance of RISs, secrecy communication rate i.e., physical layer security can be significantly improved \cite{8723525,8743496}. The work of \cite{DBLP:journals/corr/abs-2011-03726} examined the performance gain achieved by deploying an RIS in covert communications. RIS was proposed to create friendly multipaths for  directional modulation (DM) such that two confidential bit streams (CBSs) can be transmitted from Alice to Bob in \cite{DBLP:journals/corr/abs-2008-05067}. The important theoretical performance like ergodic spectral efficiency, symbol error probability, and outage probability was analysed and optimized in \cite{Pan1,8796421,8746155}. The channel estimation in the RIS-assisted scenario was studied in \cite{9370097,8879620}. More realistic limits  like discrete phase-shift and phase-dependent amplitude were considered in \cite{8930608}.  The authors in \cite{9133134} proposed a reflective modulation (RM) scheme for RIS-based communications, utilizing both the reflective patterns and transmit signals to carry information.

However, the above papers applied simple mathematical models that regard the  reflective elements of RIS as a diagonal matrix with phase shifts values, leading to relatively simplified algorithm designs and performance predictions.  There are some existing works that have studied the physical and electromagnetic models of the RIS in free space \cite{8936989,DBLP:journals/corr/abs-1912-06759,9206044,9184098}. The responses of  RISs to the radio waves were studied and based on this,  physical and electromagnetic path loss models in free space was established. The works showed that the path loss  model in near-field/far-field scenarios are of two kinds, depending on  the distance between the RIS and the transmitter/receiver. In the far-field condition, the spherical wave generated by the transmitter can be approximately regarded as a plane wave\cite{stutzman1997atd}. Thus it is more tractable for theoretical  analysis. The proposed free-space path loss model in \cite{9206044}, which is first validated through extensive simulation results, revealed the relationships between the free-space path loss of RIS-assisted wireless communications and the distances from the transmitter/receiver to the RIS, the size of the RIS, the near-field/far-field effects of the RIS, and the radiation patterns of antennas and unit cells. Moreover, the analytical model  matches quite well with the experiments in a microwave anechoic. Therefore, we apply the tractable and reliable path loss model proposed in \cite{9206044} to our work.

In this paper, we consider a far-field RIS-assisted MISO wireless communication system in free space.  The RIS-assisted free-space communication has been applied in many important scenarios, mainly in the field of UAV network\cite{9467275,9367288} and mmWave transmission\cite{9221316,9382000}, which have been both spotlighted as  next-generation communications.   To achieve the performance limit of this system from the physical and electromagnetic points of view, a comprehensive optimization, including  beamforming of the transmitter, phase shifts of the RIS and placement of the RIS, is formulated and addressed.

Our main contributions are summarized as follows:
\begin{enumerate}
\item The comprehensive optimization problem of far-field RIS-assisted wireless communication in free space, considering the physical and electromagnetic nature of RIS, is formulated. By exploiting the propagation property of electromagnetic wave in free space, under the principle of phase alignment and maximum ratio transmission, we derive closed-form solutions of beamforming and phase shifts to maximize the received power. Due to the extreme accuracy of the approximation technology, the performance of the closed-form solutions approaches the upper bound of the received power, verified in simulation part. Besides, the robustness of the proposed solutions in terms of the position perturbation of RIS is verified. It is also found that the optimal orientation of  RIS is just to perform specular reflection.  Moreover, a design principle of manufacturing the RIS is provided to countervail the deteriorating path loss of high-frequency electromagnetic wave.
\item  In order to reap full advantage brought by RIS, the problem of where to place an RIS in an arbitrary three-dimensional space is  discussed. By excavating the quasi-convex property of the objective function, we demonstrate that the optimal position is always on the boundary of the two-dimensional area of interest.  Then the theorem is naturally extended to arbitrary three-dimensional space.  This work provides the theoretical basis of   essentially reducing the dimension of the area of interesting. The simulation results show that the significant received power gain can be achieved,  owing to the assistance of the RIS at optimal position.
\end{enumerate}

The remainder of this paper is organized as follows. In Section II, we describe the system model of RIS-assisted wireless communication. After considering the physical and electromagnetic nature of the RIS, a joint optimization problem is formulated under far-field approximations. We propose the closed-form beamforming and phase shifts, as well as the optimal orientation of the RIS in Section III.  Based on above,  a reduced-dimension theory for finding the optimal position in the feasible space is proved in Section IV.  Moreover,  several important extensions of the aforementioned  works are discussed in Section V.  The proposed comprehensive scheme is numerically evaluated in Section VI. Finally, we draw conclusions in Section VII.

\emph{Notations:} Boldface lower case and upper case letters denote vectors and matrices, respectively. $(\cdot)^{H}$ denotes the conjugate and transpose operation and  $(\cdot)^{\ast}$ denotes the conjugate operation. $\mathbb{C}^{x\times y}$ denotes the space of $x\times y$ complex matrices.  $\mathbb{E}\{\cdot\}$ represents expectation operation. $\|\cdot\|$ denotes 2-norm. $\mathrm{diag}(\cdot)$ denotes a diagonal matrix whose diagonal elements are given by the corresponding vector. $\sigma_{max}(\cdot)$ represents the maximum singular value  of a matrix.  $\angle$ represents the element-wise taking angle  operation.


\section{System Model and Problem Formulation}

\subsection{System Model}
\begin{figure}
  \centering
  \includegraphics[width=0.5\textwidth]{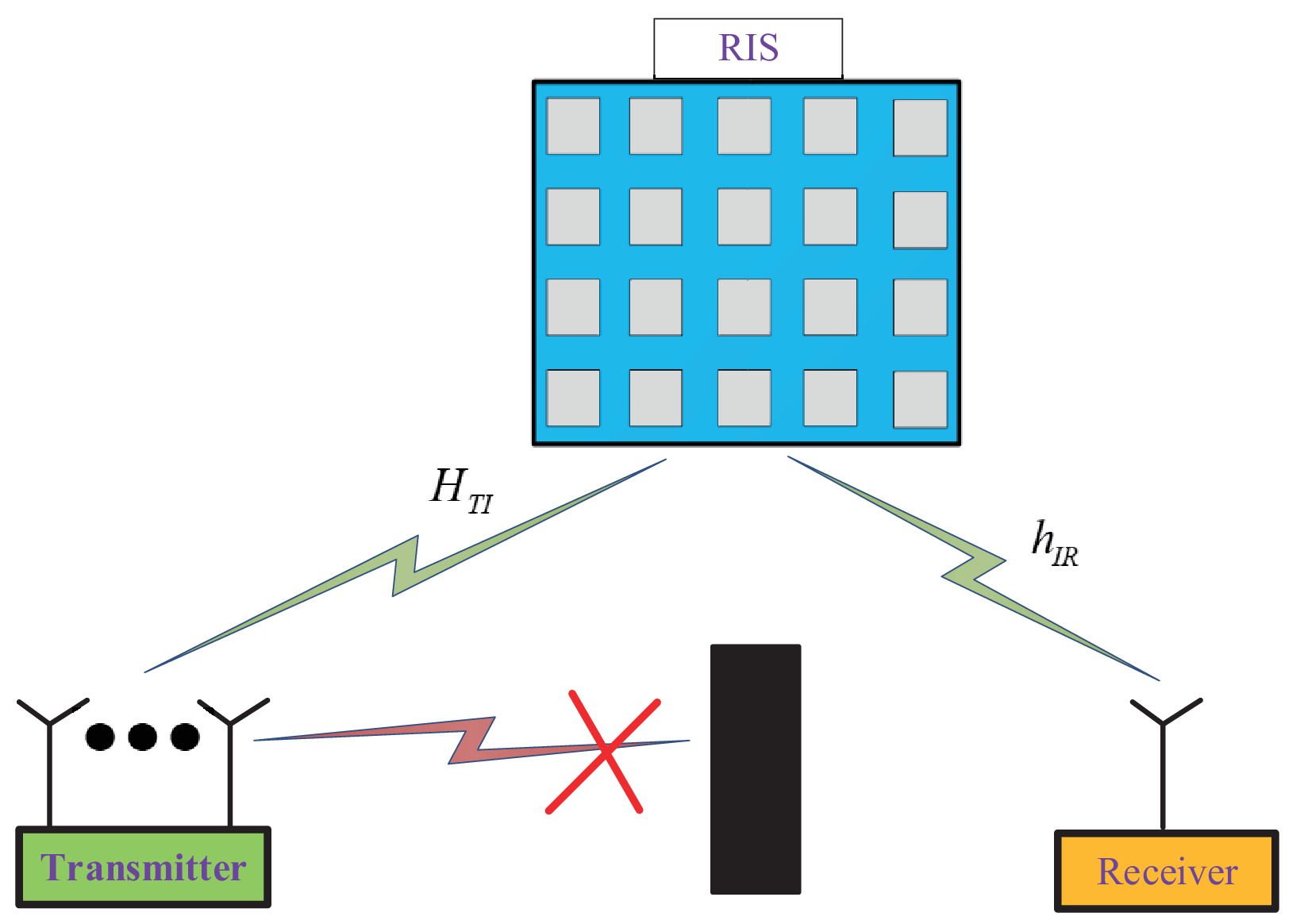}\\
  \caption{Diagram of RIS-assisted wireless communication system.}\label{Sys}
\end{figure}
We consider a multiple-input single-output (MISO) system where a RIS is deployed to assist in the wireless communication from the transmitter to the receiver, as illustrated in Fig.~\ref{Sys}. The transmitter is equipped with $N$ antennas, forming uniform linear array (ULA) while the user is equipped with a single antenna.  In this scenario, the direct link between the  transmitter and the receiver is blocked and the RIS is placed to provide a line-of-sight (LoS) link. The symmetric and regular RIS consists of $L$ reflective elements with $N_{I}$ rows and $M_{I}$ columns. The length of the single reflective element is denoted as $d_{x}$ and  the width is denoted as $d_{y}$.  For simple notations, the RIS, the transmitter, and the receiver are denoted as I, T, and R, respectively.

Let us define  ${\bf{H}}_{TI}^{H}\in \mathbb{C}^{L\times N}$ as the channel between the transmitter and the RIS and ${\bf{h}}_{IR}^{H}\in \mathbb{C}^{1\times L}$ as  the channel between the receiver and the RIS.   Using the amplitude and phase  to represent the plural element, the  channels are given by
\begin{subequations}\label{Channel1}
\begin{align}
\mathbf{H}_{TI}^H=\begin{bmatrix}
                      a_{TI,1,1}e^{j\theta_{TI,1,1}} &\cdots &  a_{TI,1,N}e^{j\theta_{TI,1,N}} \\
                                 \vdots      &\ddots &            \vdots \\
                     a_{TI,L,1}e^{j\theta_{TI,L,1}} &\cdots & a_{TI,L,N}e^{j\theta_{TI,L,N}} \\
                     \end{bmatrix},                                                                                                    \
\end{align}
\begin{align}
\mathbf{h}_{IR}^H=\left[a_{IR,1}e^{j\theta_{IR,1}}  ,\cdots, a_{IR,L}e^{j\theta_{IR,L}}\right].
\end{align}
\end{subequations}

Note that, in the LoS link, the channel element is mainly related to the position relationships of the communication parties. For example, $a_{TI,p,q}e^{j\theta_{TI,p,q}}$  is determined by the distance from the $p$-th antenna of T to the $q$-th reflective element directly. The establishment of the geometric model between communication parties is important. After selecting a coordinate system and choosing the center point of the transmitter/receiver and the RIS as reference points, we represent the positions of T, I, and R by vector $\mathbf{r}_{T}$, $\mathbf{r}_{I}$, and $\mathbf{r}_{R}$ respectively.  Let $\xi$ denote the  orientation of RIS. It is embodied in the elevation angle and the azimuth angle from T to I (denoted as  $\theta_{t}$ and  $\varphi_{t}$ respectively), and the elevation angle and the azimuth angle from R to the I (denoted as $\theta_{r}$ and  $\varphi_{r}$ respectively). Tuning the orientation of the RIS purposely means adjusting the RIS-related  angles ( $\theta_{t}$,  $\varphi_{t}$, $\theta_{r}$ and  $\varphi_{r}$) under geometric limitations, as illustrated in Fig.~\ref{notations}\subref{angle}. When representing the position of the $q$-th reflective element as $\mathbf{r}_{I,q}$, the centre point of the $q$-th reflective element is chosen as the reference point. In the same manner, $\mathbf{r}_{T,p}$ is used to denote the position of the $p$-th antenna of $T$.  Fig.~\ref{notations}\subref{geometricalre} depicts the positional relationship between the transmitter and the RIS. The geometrical relationship  between the receiver and RIS is a simplification of it, thus omitted here. The position relationship between the elements and the  RIS can be represented by a  function as $\mathbf{r}_{I,q}=f(\mathbf{r}_{I},\xi,q)$. The function $f$ is determined by  the shape and size of the RIS. Similarly, $\mathbf{r}_{T,p}=g(\mathbf{r}_{T},p)$, $g$ is a function determined by the direction of the ULA  and the antenna spacing $\Delta d_{T}$. Let $\Delta d_{T,p}$ denote the distance between the $p$-th antenna and the reference point of T, given by
\begin{align}\label{antennainterval}
\Delta d_{T,p}=(\frac{N+1}{2}-p)\Delta d_{T},~p=1\cdots N.
\end{align}
The distance from the $p$-th antenna element to the $q$-th RIS reflective element denoted by $d_{TI,p,q}$ can be expressed as
\begin{align}\label{d_TI}
d_{TI,p,q}=|\mathbf{r}_{I,q}-\mathbf{r}_{T,p}|,~p=1\cdots N,~q=1\cdots L.
\end{align}
It should be noticed that the rule of numbering sequence for antennas of T or elements of RIS is flexible and has no impact on the  results we concerned.

\begin{figure*}
    \subfloat[\label{angle}]{%
       \includegraphics[width=0.5\textwidth]{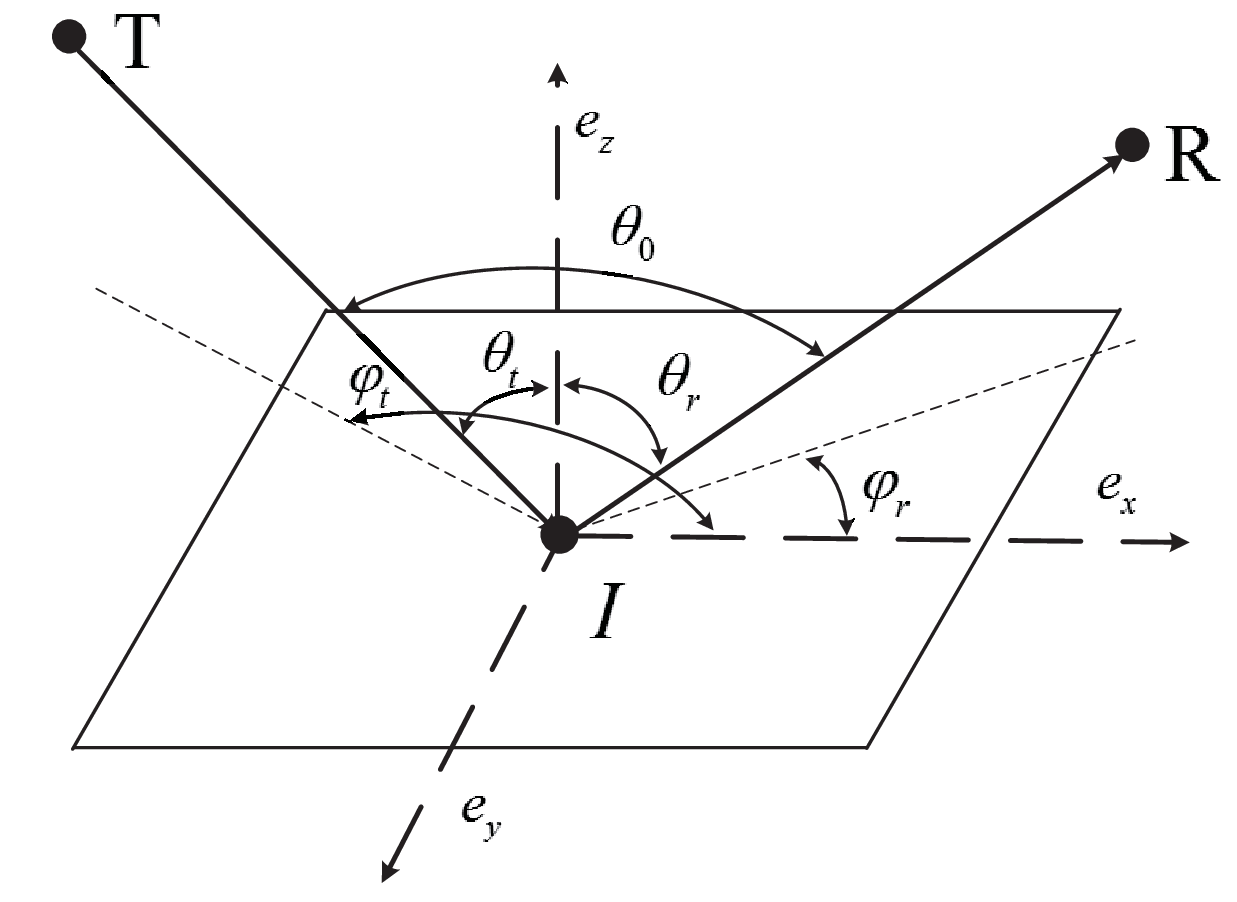}
   }
     \hfill
   \subfloat[\label{geometricalre}]{%
       \includegraphics[width=0.5\textwidth]{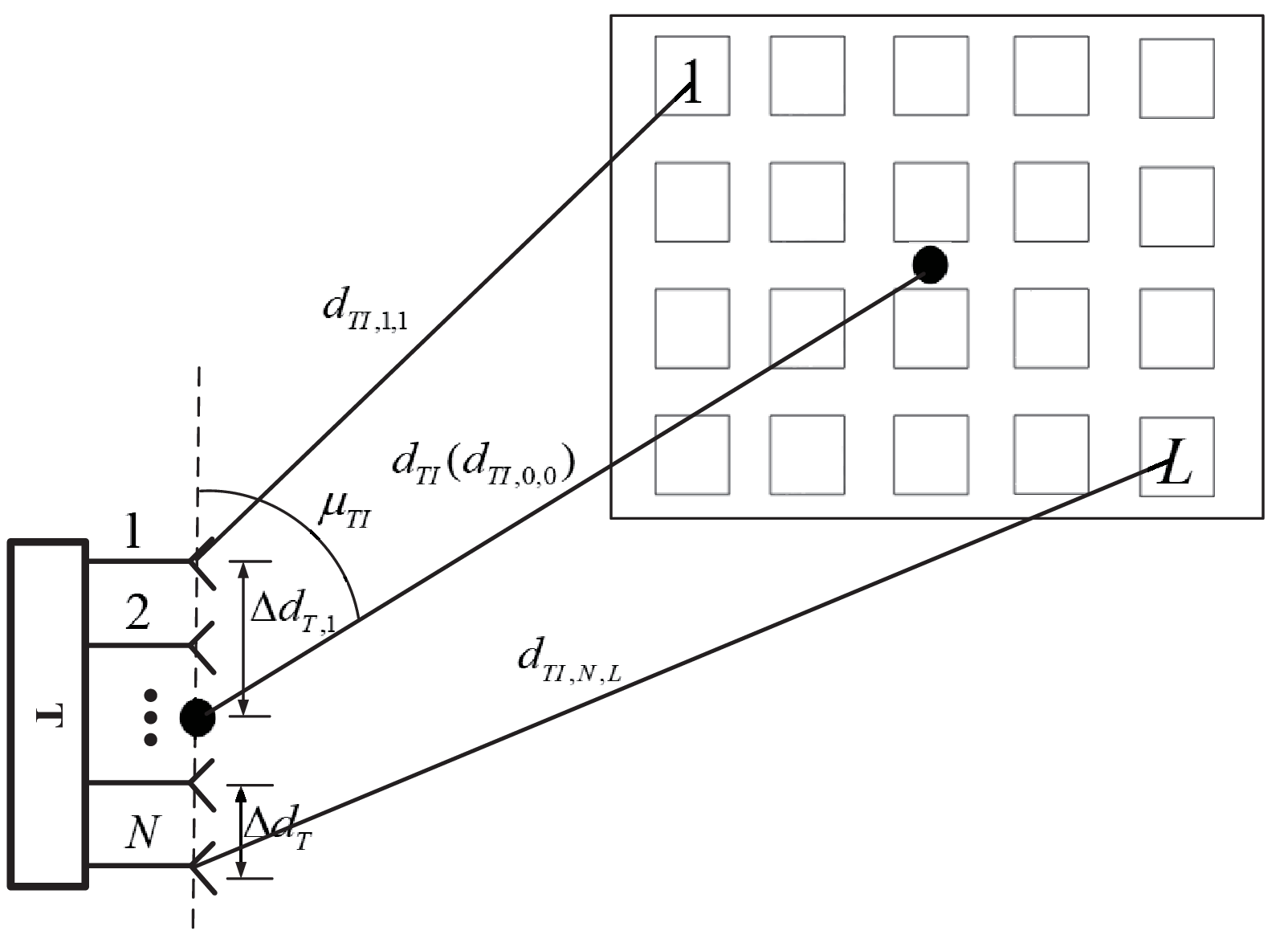}
     }
     \caption{Geometrical model of the RIS-assisted free-space wireless communication system. (a)  Angular relationships between the incident/reflected wave and the orientation of RIS.  (b) Positional relationships between the ULA of the transmitter and reflective elements of the RIS.}
     \label{notations}
\end{figure*}

In such a MISO system, the useful message $x$ is sent from  T to R, which is normalized such that $\mathbb{E}\left\{ x^{H}x\right\}=1$. Besides, they are multiplied by the beamforming vector $\mathbf{v}=(v_{1}, v_{2}, \cdots, v_{N})$  with the power limit $P_{t}$ ($\mathbf{v}^{H}\mathbf{v}\leq P_{t}$). In order to combine with the physical path loss model hereinafter, $P_{t}$ accounts for the signal power in the linear domain at unit distance.  The phase-shift matrix of the RIS, representing the properties of  the RIS, is denoted by $\bm{\Theta}=\mathrm{diag}(\bm{\theta})\in \mathbb{C}^{L\times L}$ with $\bm{\theta}=[\theta_{1}, \theta_{2}, \cdots, \theta_{L}]$.  The elements in $\bm{\Theta}$ satisfy the condition $\theta_{q}^{H}\theta_{q}=1,~q=1,\ldots L$. Equivalently,  $\theta_{q}=e^{j\varphi_{q}}$ with $\varphi_{q}$ is a real number.

The useful power of the  received signal can be expressed by
\begin{equation}\label{power}
P_{r}=|\mathbf{h}_{IR}^{H}\bm{\Theta}\mathbf{H}_{TI}^{H}\mathbf{v}|^{2}.
\end{equation}

\subsection{Physical and Electromagnetic Perspectives}

\begin{table*}
\centering
\caption{Notations about Physical and Electromagnetic Factors}
\begin{tabular}{|c|c|}
\hline
  $\mathrm{\mathbf{Symbol}}$ & $\mathrm{\mathbf{Definition}}$\\
  \hline
  $Z_{0}$  &    The characteristic impedance of the air\\
  \hline
  $G_{t}$  &    Antenna gain of  the transmitter\\
  \hline
  $G$      &    Gain of the RIS \\
  \hline
  $F$      &    Normalized power radiation pattern \\
  \hline
  $\mu_{TI}$ &  The complement angle of the DOA of the RIS at the transmitter \\
  \hline
  $\theta_{t}$ &  The elevation angle  from the reference   point of the RIS to the transmitter \\
             \hline
  $\varphi_{t}$ & The azimuth angle from the reference  point of the RIS to the transmitter\\
        \hline
  $\theta_{r}$ &  The elevation angle from the reference  point of the RIS to the receiver\\
  \hline
  $\varphi_{r}$ &  The azimuth angle from the  reference  point of the RIS to the receiver\\
  \hline
  $\Gamma$  &   The reflection coefficient of the RIS\\
  \hline
  $A_{t}$   & The aperture of the transmit antenna\\
  \hline
  $A_{r}$    &  The aperture of the receive antenna\\
  \hline
  $E^{r}$    &  The  electric field intensity  of the received signal\\
  \hline
\end{tabular}
\label{Para}
\end{table*}

In this subsection, we shall express the far-field amplitude gain and phase change of the RIS link from physical and electromagnetic perspectives.  The parameters mentioned in this subsection are explained in \textrm{Table \ref{Para}}.

In far-field operation, the length/width of the RIS is much smaller than the distances of communication parties. According to the propagation principle of electromagnetic wave, the amplitude gains from different antenna elements to different RIS elements can be assumed the same.
\begin{align}\label{farappeoximation}
&a_{TI,p,q}\approx a_{TI},~a_{IR,q}\approx a_{IR},~p=1\cdots N,~q=1\cdots L.
\end{align}
This kind of approximation is named  \emph{far-field amplitude approximation}. Since it's irrational to divide the effectiveness of RIS by $a_{TI}$ or $a_{IR}$, the joint term $a_{TI}a_{IR}\triangleq a_{TIR}$ is defined to represent the final amplitude gain from the transmitter to the receiver through individual reflective element of the RIS$\footnote{The amplitude gain is actually a combination of channel gain and antenna gain.}$.

Let us  concretize $a_{TIR}$.  Referring to the the electric fields model described in \cite{9206044}, we have
\begin{align}\label{am-gainTIR}
a_{TIR}&=\sqrt{\frac{|E^{r}|^2}{2Z_{0}}A_{r}} \nonumber\\
&=\sqrt{\frac{Z_{0}G_{t}Gd_{x}d_{y}F(\theta_{t},\varphi_{t})F(\theta_{r},\varphi_{r})\Gamma^{2}d_{TI}^{-2}d_{IR}^{-2}}{16\pi^2Z_{0}}\frac{G_{r}\lambda^2}{4\pi}} \nonumber\\
&=\sqrt{\frac{G_{t}G_{r}Gd_{x}d_{y}\lambda^2F(\theta_{t},\varphi_{t})F(\theta_{r},\varphi_{r})\Gamma^{2}}{64\pi^3}}d_{TI}^{-1}d_{IR}^{-1}   \nonumber\\
&\triangleq \delta_{TIR}d_{TI}^{-1}d_{IR}^{-1}.
\end{align}

Note that normalized power radiation pattern of the RIS is  denoted as $F(\theta_{t},\varphi_{t})F(\theta_{r},\varphi_{r})$. The general normalized power radiation pattern of a single reflective element is in the form of
\begin{subequations} \label{DeF}
\begin{align}
F(\theta_{t},\varphi_{t})=\left\{
\begin{aligned}
&\cos^{k} {\theta_{t}}  ~~ &\theta_{t}\in[0,\frac{\pi}{2}],\varphi_{t}\in[0,2\pi] \\
&0 ~~ &\theta_{t}\in(\frac{\pi}{2},\pi],\varphi_{t}\in[0,2\pi],
\end{aligned}
\right.  \\
F(\theta_{r},\varphi_{r})=\left\{
\begin{aligned}
&\cos^{k} {\theta_{r}}  ~~ &\theta_{r}\in[0,\frac{\pi}{2}],\varphi_{r}\in[0,2\pi] \\
&0 ~~ &\theta_{r}\in(\frac{\pi}{2},\pi],\varphi_{r}\in[0,2\pi],
\end{aligned}
\right.
\end{align}
\end{subequations}
where $k\geq0$$\footnote{The form $\cos^{k}$ can be used to match the normalized power radiation pattern of different unit cell and antenna designs with an appropriate $k$ \cite{stutzman1997atd}.}$. Thus, turning a RIS  at a fixed position impacts the amplitude gain $a_{TIR}$. Fig.~\ref{notations}\subref{angle} illustrates this problem. From above, the amplitude gain of the RIS is essentially a function of the position  and orientation of the RIS.

We now turn the attention to the phase changes of the EM in the RIS link. The phase changes in the channel $\mathbf{H}_{TI}^{H}$   are equivalently  written as
\begin{subequations}
\begin{equation}
\theta_{TI,q,p}=2\pi\frac{d_{TI,p,q}}{\lambda}\triangleq 2\pi\frac{d_{TI}+\Delta d_{TI,p,q}}{\lambda},
\end{equation}
\begin{equation}
\theta_{IR,q}=2\pi\frac{d_{IR,p}}{\lambda}\triangleq 2\pi\frac{d_{IR}+\Delta d_{IR,q}}{\lambda},
\end{equation}
\end{subequations}
where $\lambda$ is the carrier wavelength.  Because the size of the reflective element and the antenna element separation is the same order or sub-order of the carrier wavelength, unlike amplitude gains, different phase changes can't be assumed to be the same. However, in  far-field operation,  a tight distance approximation is suitable to apply, which is shown at the top of next page.
\newcounter{mytempeqncnt}
\begin{figure*}[ht]
\normalsize
\setcounter{mytempeqncnt}{\value{equation}}
\setcounter{equation}{8}
\begin{subequations}\label{farappro}
\begin{align}
\Delta d_{TI,p,q} &=d_{TI,p,q}-d_{TI,p,0}+d_{TI,p,0}-d_{TI} \nonumber\\
&\overset{a}{\approx}d_{TI,p,q}-d_{TI,p,0}+\Delta d_{T,p}\cos{\mu_{TI}}  \nonumber\\
&\overset{b}{\approx}-\sin\theta_{t,p}\cos\varphi_{t,p}(m_{q}-\frac{M_{I}+1}{2})d_{x}-\sin\theta_{t,p}\sin\varphi_{t,p}(n_{q}-\frac{N_{I}+1}{2})d_{y}
+(\frac{N+1}{2}-p)\Delta d_{T}\cos{\mu_{TI}}  \nonumber\\
&\overset{c}{\approx}\underbrace{-\sin\theta_{t}\cos\varphi_{t}(m_{q}-\frac{M_{I}+1}{2})d_{x}-\sin\theta_{t}\sin\varphi_{t}(n_{q}-\frac{N_{I}+1}{2})d_{y}}_{\Delta d^{T}_{I,q}}+\underbrace{(\frac{N+1}{2}-p)\Delta d_{T}\cos{\mu_{TI}}}_{\Delta d^{I}_{T,p}},
\end{align}
\begin{align}
\Delta d_{IR,q}&=d_{IR,q}-d_{IR}\approx \underbrace{-\sin\theta_{r}\cos\varphi_{r}(m_{q}-\frac{M_{I}+1}{2})d_{x}-\sin\theta_{r}\sin\varphi_{r}(n_{q}-\frac{N_{I}+1}{2})d_{y}}_{\Delta d^{R}_{I,q}}.
\end{align}
\end{subequations}
\setcounter{equation}{9}
\hrulefill
\vspace*{4pt}
\end{figure*}
In the formula (\ref{farappro}), $\theta_{t,p}/\varphi_{t,p}$ represents the elevation/azimuth angle at the RIS from the $p$-th transmitting antenna to the RIS.  $m_{q}$ is the index number of columns of the $q$-th element and $n_{q}$ is the index number of rows of the $q$-th element.  It needs to be mentioned  that there are $\sin(\cdot)$ functions instead of $\cos(\cdot)$ functions in existing works\cite{6831645,5996700} due to the use of the supplementary angle. This kind of approximation is named  \emph{far-field phase approximation}.

In more details, approximation $(a)$ requires $d_{TI} \gg \Delta d_{T,p}$, similar to the traditional MIMO model\cite{DBLP:books/cu/G2005}. Approximation $(b)$ is based on the formula $eq.~32$ in \cite{9206044}, which requires $d_{TI}\gg \sqrt{m_{q}^2d_{x}^2+n_{q}^2d_{y}^2}$. We assume that the elevation/azimuth angles at the RIS from different antennas of the transmitter are identical, deriving approximation $(c)$. This assumption is reasonable when  $d_{TI} \gg \Delta d_{T,p}$. As a conclusion, The  conditions of \emph{far-field approximation}are summarized as follows
\begin{align}
d_{TI}\gg N\Delta d_{T}, d_{TI}\gg L \sqrt{d_{x}^2+d_{y}^2}, d_{IR} \gg L \sqrt{d_{x}^2+d_{y}^2}.
\end{align}

\subsection{Problem Formulation}
From above, we can see that the channel gain and phase changes  are both  tightly related to the position and the orientation of the RIS. Some existing works, considering a fixed RIS, have  designed  the beamforming vector $\mathbf{v}$ and phase-shift matrix $\bm{\Theta}$  jointly to improve the received power, namely the information  achievable rate. In this paper, the position of the RIS $\mathbf{r}_{I}$ and the orientation of the RIS $\xi$ are also  under-determined variables, thus they can be utilized to further enhance the received power. Considering practical limitations, the joint optimization problem of RIS-assisted  wireless communication is given by
\begin{align}\label{P1}
\mathrm{(P1)}:&\max_{\mathbf{v},\bm{\Theta},\mathbf{r}_{I},\xi}~~~~P_{r}\nonumber\\
&~~\text{s. t.}~~\mathbf{v}^{H}\mathbf{v}\leq P_{t}\nonumber\\
&~~~~~~~~~\bm{\Theta}\in\mathbb{B}\nonumber\\
&~~~~~~~~~\{\mathbf{r}_{I},\xi\}\in \mathbb{S}_{0} \nonumber\\
&~~~~~~~~~\mathbf{r}_{I}\in \mathbb{S}_{1}\cap \mathbb{S}_{2},
\end{align}
where $\mathbb{S}_{0}$ is a set that guarantees the EM to propagate from T to R through the RIS directly, which is specified by application environment. In more details, it is determined by the relative positional relationship of communication parties, obstacles and the orientation of RIS. $\mathbb{S}_{1}$ can be expressed as $(d_{TI}\geq r_{0})\cap (d_{IR}\geq r_{1})$. $r_{0}$ and $r_{1}$ are the minimum distance to guarantee the far-field condition, respectively. $\mathbb{S}_{2}$ is the feasible area that the RIS can be fixed at. Herein, we denote the feasible set of $\bm{\Theta}$ as $\mathbb{B}$ here. Without loss of generality, we assume that the antennas in T and R are omni-directional$\footnote{If we consider the directional antenna, only placing the RIS in the main lobe is meaningful. Let $\mathbb{S}_{3}$ denotes the area in the main lobe.  Therefore, a new restriction $\mathbf{r}_{I}\in\mathbb{S}_{3}$ should be added to $\mathrm{(P1)}$.}$.

The problem ($\mathrm{P1}$) is a joint optimization of four variables, and it is challenging to solve directly. In the following, we devote to solving it by two phases.  In the first phase, we assume  $\mathbf{r}_{I}$  is fixed, and the closed-form global optimal solutions of $\bm{\Theta}$ and $\mathbf{v}$ as well as  the optimal $\xi$  are proposed. In the second phase, based on the optimal solutions of the other three variables for a fixed $\mathbf{r}_{I}$, the problem $\mathrm{(P1)}$ become an unadulterated position optimizing problem ($\mathrm{P2}$).  Substituting the optimal position of ($\mathrm{P2}$) back to the optimal $\bm{\Theta}$ and $\mathbf{v}$ and $\xi$, which can be treated as  functions of $\mathbf{r}_{I}$, the global optimal solutions of $\mathrm{(P1)}$ are all concretized$\footnote{However, since we focus on far-field communications, the channels have been already approximated according to the far-field approximations in ($\mathrm{P1}$). Therefore, the global optimal solutions of  $\mathrm{(P1)}$ are near-optimal to the primal  problem without approximations.}$. Note that, the sub-optimization problems in each phase can also be viewed as meaningful independent works.

\section{Optimal Strategy with a Fixed RIS}
To decouple the solutions of $\bm{\Theta}$, $\mathbf{v}$, $\xi$  with $\mathbf{r}_{I}$ in problem $\mathrm{(P1)}$, we consider a fixed RIS in this section. We find that the  $\bm{\Theta}$ and $\mathbf{v}$ are coupled with each other, but irrelevant to $\xi$. In addition to derive optimal solutions of $\bm{\Theta}$, $\mathbf{v}$ and $\xi$, we propose an
\emph{anti-decay designing  principle} of manufacturing RIS for free-space THz communication.

\subsection{Optimal $\bm{\Theta}$ and $\mathbf{v}$}
The solutions of $\bm{\Theta}$ and $\mathbf{v}$ for the fixed RIS can be obtained using the iterative method proposed in \cite{DBLP:conf/globecom/WuZ18}. However, it's challenging to solve the joint optimization problem $\mathrm{(P1)}$ based on it. Herein, by exploiting the tightly coupled property of channel elements in the free space, we provide the analytic and optimal solutions of $\bm{\Theta}$ and $\mathbf{v}$.

For any given $\mathbf{\Theta}$, it is widely known that maximum-ratio transmission (MRT) is the optimal transmit beamforming to maximize the received power  \cite{DBLP:conf/globecom/WuZ18}, i.e.,
\begin{align}
\mathbf{v}^{\star}=\sqrt{P_{t}}\frac{(\mathbf{h}_{IR}^{H}\bm{\Theta}\mathbf{H}_{TI}^{H})^{\ast}}{\left\|\mathbf{h}_{IR}^{H}\bm{\Theta}\mathbf{H}_{TI}^{H}\right\|}.
\end{align}
Applying the optimal $\mathbf{v}^{\star}$, the received power can be expressed by
\begin{align}\label{Pv}
P_{r}=\left\|\mathbf{h}_{IR}^{H}\bm{\Theta}\mathbf{H}_{TI}^{H}\right\|^2P_{t}.
\end{align}

According to the \emph{far-field gain  approximation}  and  \emph{far-field phase approximation}, the channel matrix $\mathbf{H}_{TI}^{H}$ can be rewritten as
\begin{subequations}\label{deb}
\begin{align}
\mathbf{H}_{TI}^{H}=a_{TI}e^{j\frac{2\pi}{\lambda}d_{TI}}\mathbf{a}\mathbf{b}^{T},
\end{align}
\begin{align}
\mathbf{a}=\left[e^{j\frac{2\pi}{\lambda}\Delta d^{T}_{I,1}}, e^{j\frac{2\pi}{\lambda}\Delta d^{T}_{I,2}},\cdots, e^{j\frac{2\pi}{\lambda}\Delta d^{T}_{I,L}},\right],
\end{align}
\begin{align}
\mathbf{b}=\left[e^{j\frac{2\pi}{\lambda}\Delta d^{I}_{T,1}}, e^{j\frac{2\pi}{\lambda}\Delta d^{I}_{T,2}},\cdots, e^{j\frac{2\pi}{\lambda}\Delta d^{I}_{T,N}},\right].
\end{align}
\end{subequations}
To maintain the uniform, we rewrite the $\mathbf{h}_{IR}^{H}$ as
\begin{subequations}
\begin{align}
\mathbf{h}_{IR}^{H}=a_{IR}e^{j\frac{2\pi}{\lambda}d_{IR}}\mathbf{c}^{T},
\end{align}
\begin{align}
\mathbf{c}=\left[e^{j\frac{2\pi}{\lambda}\Delta d^{R}_{I,1}}, e^{j\frac{2\pi}{\lambda}\Delta d^{R}_{I,2}},\cdots, e^{j\frac{2\pi}{\lambda}\Delta d^{R}_{I,L}},\right].
\end{align}
\end{subequations}

After substituting them into (\ref{Pv}), the new form of the received power is given by
\begin{align}\label{powerform}
P_{r}&=a_{TI}a_{IR}\left\|\mathbf{c}^{T}\bm{\Theta}\mathbf{a}\mathbf{b}^{T}\right\|^2P_{t}\nonumber\\
&=a_{TIR}\left\|\bm{\theta}^{T}\mathrm{diag}(\mathbf{c}^{T})\mathbf{a}\mathbf{b}^{T}\right\|^2P_{t}\nonumber\\
&=a_{TIR}\left\|\bm{\theta}^{T}\mathbf{d}\mathbf{b}^{T}\right\|^2P_{t}=a_{TIR}N(\bm{\theta}^{T}\mathbf{d})^2P_{t},
\end{align}
with
\begin{align}\label{ded}
\mathbf{d}=\left[e^{j\frac{2\pi}{\lambda}\Delta d^{T}_{I,1}+\Delta d^{R}_{I,1}}, \cdots, e^{j\frac{2\pi}{\lambda}\Delta d^{T}_{I,L}+\Delta d^{R}_{I,L}},\right].
\end{align}
Hence, the optimal $\bm{\theta}$ to maximize the received power is
\begin{align}
\bm{\theta}^{\star}=\mathbf{d}^{\ast}
\end{align}
More clearly, considering the expressions of $\Delta d$ terms in (\ref{farappro}), the optimal phase shift of reflective element $q$ is designed as
\begin{subequations}\label{opphase}
\begin{align}\label{opphasemain}
\varphi_{q}^{\star}\triangleq &2\pi\frac{1}{\lambda}\big((\sin\theta_{t}\cos\varphi_{t}+\sin\theta_{r}\cos\varphi_{r})(m_{q}-\frac{M_{I}+1}{2})d_{x} \nonumber\\  &+(\sin\theta_{t}\sin\varphi_{t}+\sin\theta_{r}\sin\varphi_{r})(n_{q}-\frac{N_{I}+1}{2})d_{y}\big),
\end{align}
\begin{align}
\theta_{q}^{\star}=e^{j\varphi_{q}^{\star}}.
\end{align}
\end{subequations}

With the closed-form $\bm{\theta}^{\star}$, the decoupled analytic solution of $\mathbf{v}^{\star}$  is given by
\begin{align}\label{opbeaming}
\mathbf{v}^{\star}=\frac{\sqrt{P_{t}}}{\sqrt{N}}\mathbf{b}^{\ast}.
\end{align}
More clearly, considering the expressions of $\Delta d$ terms in (\ref{farappro}), the $p$-th element of $\mathbf{v}$  is designed as
\begin{align}
v_p^{\star}=\frac{\sqrt{P_{t}}}{\sqrt{N}}e^{j-\frac{2\pi}{\lambda}(\frac{N+1}{2}-p)\Delta d_{T}\cos{\mu_{TI}}}.
\end{align}

Under the proposed closed-form beamforming and phase shifts, the optimal received power becomes
\begin{align}\label{postiontopower1}
P_{r}=NL^{2}a_{TIR}^2P_{t}.
\end{align}
As seen, the growth of the received power follows a $N$ scaling with the antenna elements and $L^{2}$ scaling with the reflective elements, which has been recognized
in several recent works \cite{9184098,8910627}.

The optimal received power can be expended  as
\begin{align}\label{powerwavelength}
P_{r}&=(L\lambda)^2\frac{G_{t}G_{r}Gd_{x}d_{y}F(\theta_{t},\varphi_{t})F(\theta_{r},\varphi_{r})\Gamma^{2}}{64\pi^3}d_{TI}^{-2}d_{IR}^{-2}NP_{t}\nonumber\\
&=\frac{\lambda^2}{d_{x}d_{y}}\frac{S^2_{RIS}G_{t}G_{r}GF(\theta_{t},\varphi_{t})F(\theta_{r},\varphi_{r})\Gamma^{2}}{64\pi^3}d_{TI}^{-2}d_{IR}^{-2}NP_{t},
\end{align}
where $S_{IRS}$ denotes the total area of the RIS. The formula (\ref{powerwavelength}) indicates an  \emph{anti-decay designing  principle} of manufacturing RIS for  mmWave/THz communication in free space. To maintain the the received power as the
frequency of carrier wave increases, the RIS must manufactured according to the wavelength. In more details, there are two alternatives: One is when the  area of total RIS is limited, the size of   the reflective element should be  in a  fix proportional to the carrier wavelength. The other is when the area of reflective element is fixed, the number of the reflective elements is in  a fixed inverse ratio to the  carrier wavelength.

\subsection{Optimal Orientation of the RIS}
As mentioned before, the orientation of the RIS  can  be adjusted to improve the received power via maximizing $F(\theta_{t},\varphi_{t})F(\theta_{r},\varphi_{r})$. The optimal value is denoted as $F^{\star}$ hereinafter.  Due to large distances between the RIS and communication parties in the far-field case, it's reasonable to treat all reflective elements as  the reference point of the RIS when calculating $F^{\star}$.  As illustrated in Fig.~\ref{notations}\subref{angle}, the problem of turning a RIS to reap the max gain  is organized as
\begin{align}\label{problemturning}
&\max_{\theta_{t},\theta_{r}}~~F(\theta_{t},\varphi_{t})F(\theta_{r},\varphi_{r})\nonumber\\
&~~\text{s. t.}~~\theta_{t}\in (0,\frac{\pi}{2})\nonumber\\
&~~~~~~~~\theta_{r}\in (0,\frac{\pi}{2})\nonumber\\
&~~~~~~~~\theta_{0}\leq\theta_{t}+\theta_{r}\leq 2\pi-\theta_{0}.
\end{align}
It is found that when $\theta_{t}=\theta_{r}=\frac{\theta_{0}}{2}$, the objective function in (\ref{problemturning}) reaches the max value, denoted as $F^{\star}$, which is given by
\begin{align}\label{OpFF}
F^{\star}&=(\cos^{2}{\frac{\theta_{0}}{2}})^{k}=(\frac{1}{2}\cos{\theta_{0}}+\frac{1}{2})^{k} \nonumber\\
&=(\frac{d_{TI}^2+d_{IR}^2-d_{TR}^2}{4d_{TI}d_{IR}}+\frac{1}{2})^{k}.
\end{align}
From the above, the optimal orientation $\xi$  is just to make the RIS perform specular reflection, that is, the incident signal is mainly reflected towards the mirror direction $(\theta_{r} =\theta_{t})$. It is worth mentioned that this result in the MISO system is consistent with the counterpart in the SISO system\cite{9206044}.

\section{Optimal position of the RIS}
In this section, jointly with the optimal solutions of $\bm{\Theta}$, $\mathbf{v}$, $\xi$, we aim to study the optimal position of the RIS in problem $(\mathrm{P1})$.

As the objective function to optimize the position of the RIS, the maximum received power (\ref{powerwavelength}) is simplified as
\begin{align}\label{objectfunc}
P_{r}&=NL^2 \delta_{TIR}^2d_{TI}^{-2}d_{IR}^{-2}=NL^2\frac{\delta_{TIR}^2}{F^{\star}}F^{\star}d_{TI}^{-2}d_{IR}^{-2}\nonumber\\
&=NL^2\frac{\delta_{TIR}^2}{F^{\star}}\underbrace{(\frac{d_{TI}^2+d_{IR}^2-d_{TR}^2}{4d_{TI}d_{IR}}+\frac{1}{2})^{k}d_{TI}^{-2}d_{IR}^{-2}}_{F_{object}}.
\end{align}
According to (\ref{am-gainTIR}), $\frac{\delta_{TIR}^2}{F^{\star}}$ is a constant with no relationship to the position of the RIS. The position change of the RIS only impacts the value of  $d_{TI}$ and  $d_{IR}$ in this formula.

The position optimizing question is formulated as
\begin{align}\label{P2}
\mathrm{(P2):}&\max_{\mathbf{r}_{I}}~~ F_{object}  \nonumber\\
&~~\text{s. t.} ~~\mathbf{r}_{I}\in\mathbb{S},
\end{align}
where $\mathbb{S}=\mathbb{S}_{0}\cap\mathbb{S}_{1}\cap \mathbb{S}_{2}$  represents the feasible space to place RIS. In this section, we'll study the  property of optimal position in a two-dimensional plane, then extend it to arbitrarily three-dimensional space.

It is known that a three-dimensional feasible space can be fully split  into parallel two-dimensional subspaces. Then, the problem $\mathrm{(P2)}$ is replaced by many parallel subproblems, where the position of the RIS is an two-dimensional variable. It is noted that the principle of splitting is flexible and has no influence to the final result.

We start from  not considering the constraints of  placing RIS,. Then the parallel subproblems is to optimize the position of RIS on an infinite large plane, named S. Adequately, we consider the $\frac{1}{2}$ plane of S due to the symmetry. Hereinafter, the plane S represents the half plane. The scenario of finding optimal position on plane S is illustrated in  Fig.~\ref{planeS}. The points $T'$ and $R'$ represent the projection of T and R on plane S respectively.  The line l denotes the line containing $T'$ and $R'$ on plane S. The term $l_{R'T'}$ denotes the  line segment whose  endpoints are $T^{'}$ and $R^{'}$. The term $l_{\overrightarrow{T'R'}}$ accounts for  the  half-line staring from $T^{'}$  and containing $R^{'}$. By contrary,  the  half-line  $l_{\overrightarrow{T'R'}}$ stars from $R^{'}$  and contains $T^{'}$. The term $\nu_{TI}$ is the intersection angle of $l_{TR}$ and $l_{TT'}$.
\begin{figure}
  \centering
  \includegraphics[width=0.5\textwidth]{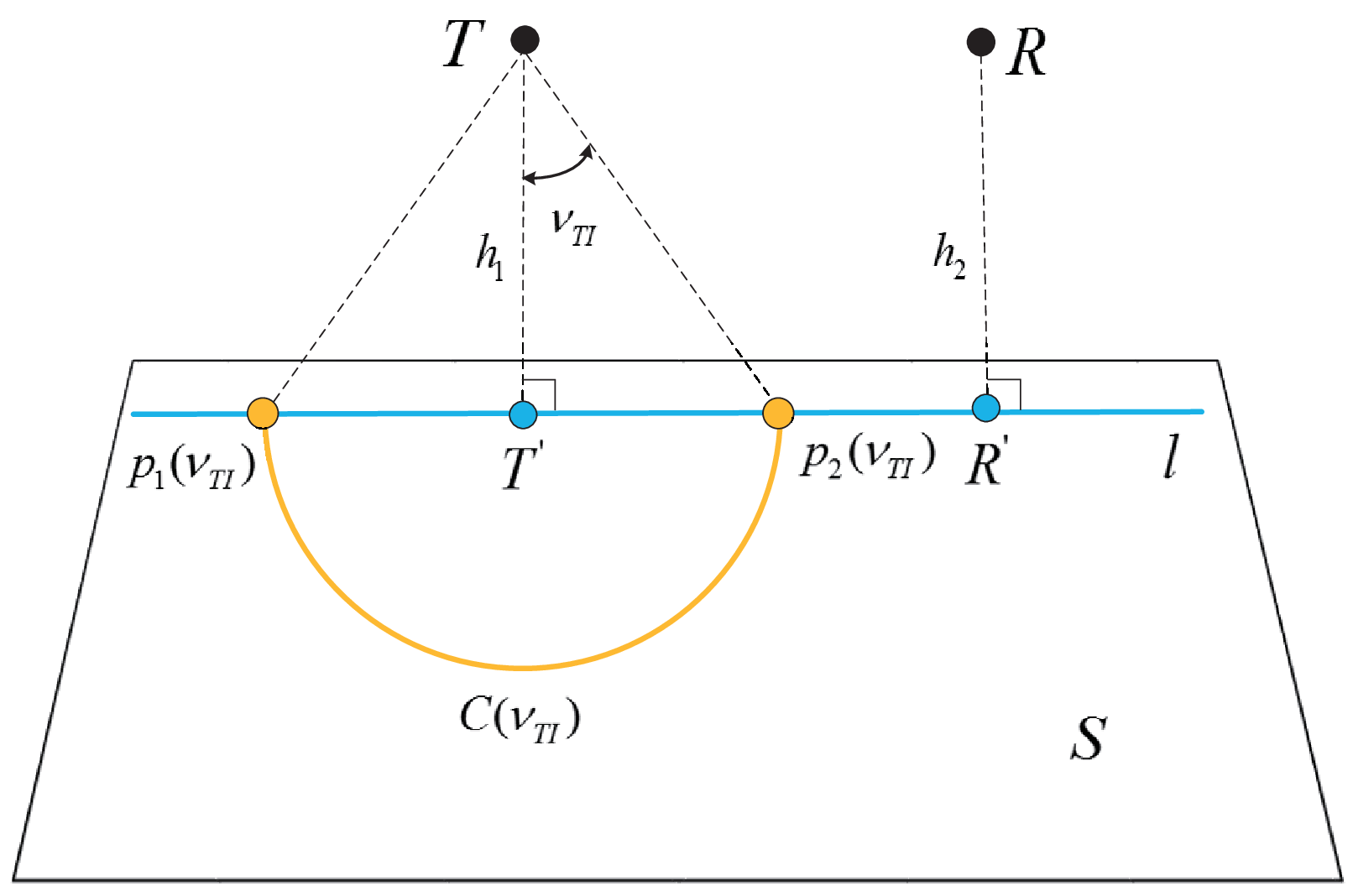}\\
  \caption{Diagram of placing the reference point of the RIS on a plane S.}\label{planeS}
\end{figure}

\begin{figure}
  \centering
  \includegraphics[width=0.5\textwidth]{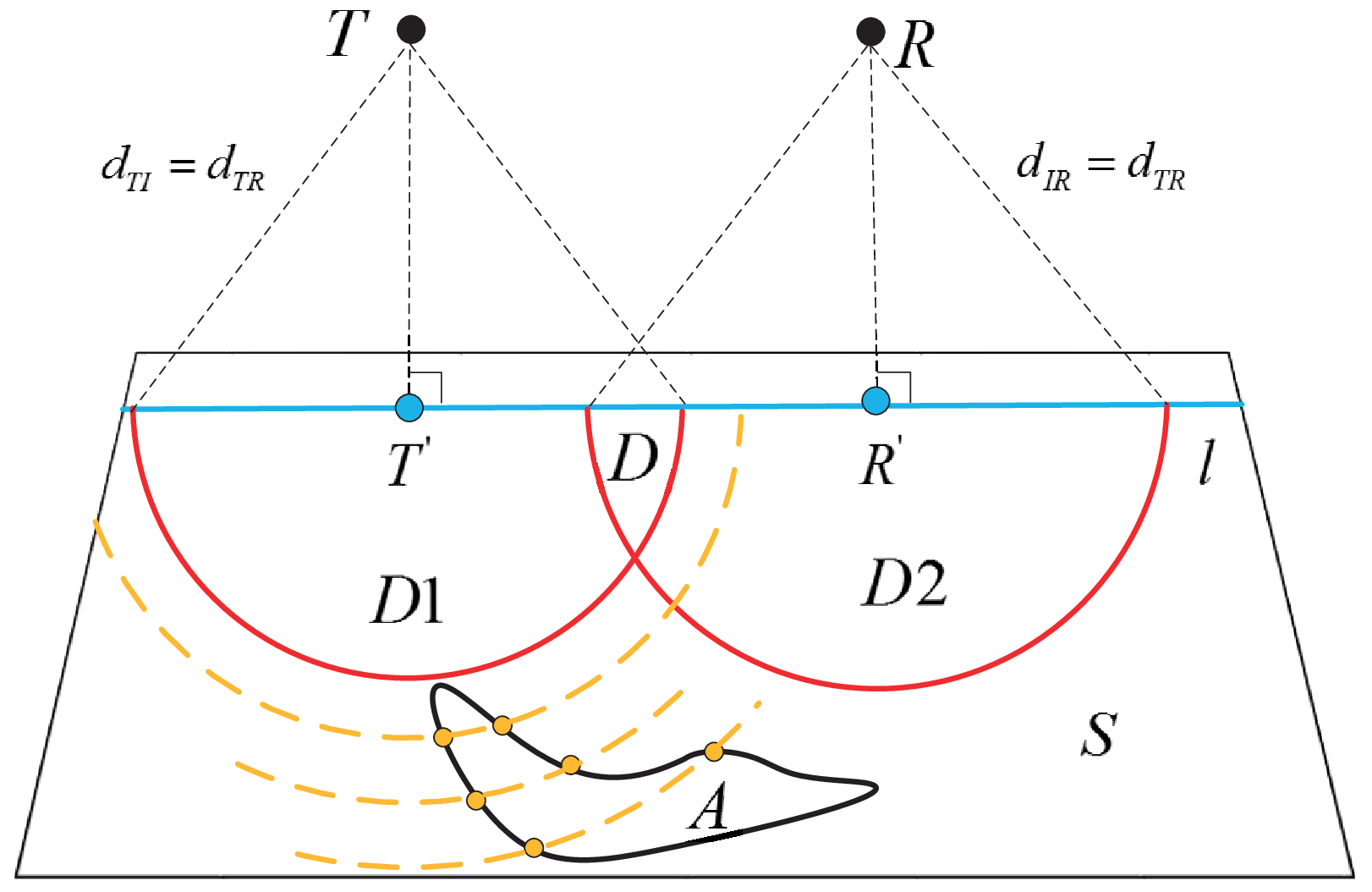}\\
  \caption{Illustration of area D and corollary 1. The yellow point denotes the possible optimal position of the RIS on the intersection of A and $C(\nu_{TI})$.}\label{sideproof}
\end{figure}

It is obvious that the values of $d_{IR}$ and $\nu_{TI}$ can uniquely determine a position of the RIS on plane S.  The optimal problem $\mathrm{(P2)}$ is reformulated as:
\begin{align}
&\max_{d_{IR},\nu_{TI}}~~~~F_{object}        \nonumber\\
&~~\text{s. t.}~~(d_{IR},\nu_{TI})\in \mathbb{C}     \nonumber\\
&~~~~~~~~~d_{TI}=\frac{h_{1}}{\cos{\nu_{TI}}},
\end{align}
where, $\mathbb{C}$ represents  possible unions of $d_{IR}$ and $\nu_{TI}$.  For a given $\nu_{TI}$, the feasible set of the RIS on plane S is a circle marked as $C_{\nu_{TI}}$. After excavating the quasi-convex property of $F_{object}$, the following theorems are obtained.

\begin{theorem}\label{th1}
If $k=0$ in (\ref{DeF}), on plane S, the optimal position of the RIS is must on the line segment $l_{R'T'}$. If $k>0$ in (\ref{DeF}), on the area S-D, the optimal position is must on  line l.The area D is a  set of  $(d_{TI}\leq d_{TR})\cap (d_{TI}\leq d_{TR})$,  as shown in Fig.~\ref{sideproof}.
\end{theorem}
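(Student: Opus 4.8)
The plan is to analyze the objective $F_{object}$ as a function of the two scalar parameters $(d_{IR},\nu_{TI})$ that coordinatize the half-plane $S$, and to show that for each fixed $\nu_{TI}$ the maximizer over the circle $C_{\nu_{TI}}$ is pushed to an extreme point, namely a point on the line $l$. First I would fix $\nu_{TI}$, so that $d_{TI}=h_1/\cos\nu_{TI}$ is constant along $C_{\nu_{TI}}$, and regard $F_{object}$ purely as a function of $d_{IR}$. For $k=0$ this collapses to maximizing $d_{TI}^{-2}d_{IR}^{-2}$, hence to minimizing $d_{IR}$, so the RIS is driven toward $R'$ along the circle; combined with an outer optimization over $\nu_{TI}$ that analogously drives $d_{TI}$ down, one concludes the optimum lies on the segment $l_{R'T'}$. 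For $k>0$ the extra factor $\big(\tfrac{d_{TI}^2+d_{IR}^2-d_{TR}^2}{4d_{TI}d_{IR}}+\tfrac12\big)^k = (\cos^2\tfrac{\theta_0}{2})^k$ couples the two distances, so the monotonicity argument is subtler; here I would substitute $u=d_{IR}$, write $F_{object}(u)$ with $d_{TI}$ and $d_{TR}$ as parameters (note $d_{TR}$ is itself fixed once $\nu_{TI}$ is, since $T$, $R$ and the line $l$ are fixed), and study $\frac{d}{du}\log F_{object}$.

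The key steps, in order: (1) reduce to the one-variable problem on $C_{\nu_{TI}}$ as above; (2) compute $g(u):=\log F_{object}(u) = k\log\!\big(\tfrac{d_{TI}^2+u^2-d_{TR}^2}{4 d_{TI} u}+\tfrac12\big) - 2\log d_{TI} - 2\log u$ and show $g$ is quasi-concave in $u$ on the feasible interval, or more directly that $g'$ has no interior zero that is a local maximum, so the max is attained at an endpoint of the range of $u$ along the circle; (3) identify those endpoints: as $(d_{IR},\nu_{TI})$ ranges over the admissible region, for the subregion $S\setminus D$ (where $d_{TI}\le d_{TR}$, so that the incidence geometry is "favorable") the endpoint in question is exactly the point where $C_{\nu_{TI}}$ meets the line $l$; (4) finally optimize the resulting one-dimensional problem along $l$ over $\nu_{TI}$, using the constraint $d_{TI}=h_1/\cos\nu_{TI}$, to confirm consistency and to pin down, for $k=0$, that the optimum is on the bounded segment $l_{R'T'}$ rather than on the two rays. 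The quasi-convexity claim invoked in the text ("after excavating the quasi-convex property of $F_{object}$") is what underlies step (2), and I would isolate it as the technical core: reduce it to showing that a certain rational function of $u$ (obtained by clearing denominators in $g'(u)=0$) changes sign at most once in the relevant manner.

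The main obstacle I anticipate is step (2)–(3): establishing the quasi-concavity of $g(u)$ uniformly over the relevant parameter range and correctly characterizing which endpoint of the arc $C_{\nu_{TI}}\cap S$ is the maximizer. The factor $(\cos^2\tfrac{\theta_0}{2})^k$ is increasing in $\theta_0$-decreasing geometry but the $u^{-2}$ factor pulls the other way, so the sign of $g'$ is genuinely a competition between a term that wants $u$ small and a term (through the radiation pattern) that can want $u$ moderate; one must show the net effect still has the monotone/single-crossing behavior that forces a boundary optimum. The restriction to the subregion $S\setminus D$ is presumably exactly the condition under which this competition resolves cleanly — $d_{TI}\le d_{TR}$ keeps $\cos^2\tfrac{\theta_0}{2}$ bounded away from the regime where the pattern factor could create an interior maximum — so verifying that $D$ is precisely the right exceptional set, and that on $S\setminus D$ the argument of the logarithm stays in a range where the derivative estimate goes through, is where the real work lies. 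The subsequent extension to three-dimensional space is then immediate by the slicing remark already made: the same conclusion holds on every parallel two-dimensional slice, hence globally.
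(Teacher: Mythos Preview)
Your overall strategy matches the paper's: fix $\nu_{TI}$ (hence $d_{TI}=h_1/\cos\nu_{TI}$), study $F_{object}$ as a one-variable function of $d_{IR}$ along the circle $C_{\nu_{TI}}$, and push the maximizer to the intersection with line $l$; for $k=0$ the paper then runs the symmetric argument with $\nu_{RI}$ fixed to confine the optimum to the segment $l_{R'T'}$. The paper carries out your step~(2) by writing $F(x)=x^{-1}\big(ax^{-1}+bx+\tfrac12\big)^{k/2}$ with $x=d_{IR}$, $a=(d_{TI}^2-d_{TR}^2)/(4d_{TI})$, $b=1/(4d_{TI})$, and analyzing $F'$ and $F''$ directly rather than through $\log F$.

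However, two points in your proposal are genuinely wrong and would derail the argument. First, the property you need is quasi-\emph{convexity}, not quasi-concavity: quasi-convexity is what forces the maximum on an interval to an endpoint, whereas a quasi-concave function may well attain its maximum in the interior. The paper shows, for $0<k\le 2$ and $a>0$, that $F'<0$ throughout (so $F$ is monotone, hence quasilinear), and for $k\ge 2$ and $a>0$ that at every stationary point $F''>0$, so every critical point is a local minimum; in both cases $F$ is quasi-convex. Second, you have the inequality defining the ``good'' region backwards. The quasi-convexity proof requires $a>0$, i.e.\ $d_{TI}>d_{TR}$, not $d_{TI}\le d_{TR}$ as you write. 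The excluded region $D$ is where \emph{both} $d_{TI}\le d_{TR}$ and $d_{IR}\le d_{TR}$ hold (the theorem statement contains a typo repeating $d_{TI}$); on $S\setminus D$ at least one of the two distances exceeds $d_{TR}$, and one runs the quasi-convexity argument in the \emph{other} variable, invoking the $d_{TI}\leftrightarrow d_{IR}$ symmetry of $F_{object}$. A minor point: $d_{TR}$ is a fixed global constant (the transmitter--receiver distance) and does not depend on $\nu_{TI}$.
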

\begin{proof}
See Appendix A and Appendix B.
\end{proof}

As illustrated in Fig.~\ref{sideproof}, we put forward a useful corollary  for placing the RIS on an arbitrary two-dimensional area A.
\begin{corollary}
For an arbitrary closed feasible area A on the whole plane S when $k=0$ or S$-$D when $k>0$, the optimal position of the RIS is must on A's boundary and the feasible part on line l.
\end{corollary}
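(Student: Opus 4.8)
The plan is to reuse, essentially verbatim, the quasi-convexity machinery already established for Theorem~\ref{th1}. Recall that for a fixed $\nu_{TI}$ the distance $d_{TI}=h_{1}/\cos\nu_{TI}$ is fixed, the admissible RIS positions then form the circle $C_{\nu_{TI}}$, and $F_{object}$ restricted to $C_{\nu_{TI}}$ depends only on $d_{IR}$; on the domain in question ($S$ when $k=0$, $S-D$ when $k>0$) this restriction is quasi-convex in $d_{IR}$, and $d_{IR}$ attains its extrema along $C_{\nu_{TI}}$ exactly at the two points where $C_{\nu_{TI}}$ meets the line $l$. The first thing I would extract from Appendices~A and~B is the resulting \emph{local improvement property}: at any point $\mathbf{r}$ of $C_{\nu_{TI}}$ with $\mathbf{r}\notin l$, the derivative of $d_{IR}$ along $C_{\nu_{TI}}$ at $\mathbf{r}$ is nonzero, so as one moves along the circle $d_{IR}$ sweeps an open interval around its value at $\mathbf{r}$, and by quasi-convexity there is a tangential direction along $C_{\nu_{TI}}$ in which $F_{object}$ strictly increases.

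With this in hand the corollary follows by a standard interior-point argument. Let $A$ be the closed feasible region (a subset of $S$ if $k=0$, of $S-D$ if $k>0$, so that the quasi-convexity above holds throughout $A$). Since $F_{object}$ is continuous and vanishes as $d_{TI}\to\infty$ or $d_{IR}\to\infty$, it attains its maximum over $A$ at some $\mathbf{r}_I^{\star}$. Suppose $\mathbf{r}_I^{\star}$ lies in neither $\partial A$ nor $l\cap A$; then $\mathbf{r}_I^{\star}$ is interior to $A$, so some ball $B(\mathbf{r}_I^{\star},\varepsilon)\subset A$, and $\mathbf{r}_I^{\star}\notin l$. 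Applying the local improvement property to the circle $C_{\nu_{TI}}$ through $\mathbf{r}_I^{\star}$ produces a point $\mathbf{r}_I'\in B(\mathbf{r}_I^{\star},\varepsilon)\subset A$ with $F_{object}(\mathbf{r}_I')>F_{object}(\mathbf{r}_I^{\star})$, contradicting optimality. Hence every maximizer lies in $\partial A\cup(l\cap A)$. For $k=0$ one can additionally replace $l\cap A$ by the feasible part of the segment $l_{R'T'}$, using that $F_{object}=(d_{TI}d_{IR})^{-2}$ is strictly decreasing in both $d_{TI}$ and $d_{IR}$, exactly as in the proof of Theorem~\ref{th1}.

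I expect the only delicate point to be the justification of the local improvement property at a position $\mathbf{r}_I^{\star}$ that happens to coincide with the trough of the quasi-convex profile of its own circle $C_{\nu_{TI}}$: there the tangential derivative of $F_{object}$ vanishes, so one cannot argue from a one-sided gradient and must instead observe that, since $d_{IR}$ moves monotonically away from the trough value in \emph{both} directions along $C_{\nu_{TI}}$, $F_{object}$ strictly increases either way. A minor companion issue is a possible flat piece of the quasi-convex profile; this is excluded in the strict sense by the analysis in Appendices~A and~B, and even if it occurred it would only create a connected family of equal-value optima that still meets $\partial A\cup(l\cap A)$, leaving the statement intact. The remaining ingredients — continuity, attainment of the maximum on a closed region with $F_{object}$ decaying at infinity, and the fact that a short arc of $C_{\nu_{TI}}$ through an interior point of $A$ stays inside $A$ — are routine.
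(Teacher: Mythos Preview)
Your proposal is correct and is essentially the same approach the paper has in mind: the paper gives no separate proof of this corollary, treating it as an immediate consequence of the monotonicity/quasi-convexity of $F_{object}$ along the circles $C_{\nu_{TI}}$ established in Appendices~A and~B, and your interior-point contradiction argument is exactly the natural way to make that step explicit.

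One small point to tighten. Your written argument moves only along circles of fixed $d_{TI}$, and the quasi-convexity in $d_{IR}$ is proved in Appendix~B only under $a>0$, i.e.\ $d_{TI}>d_{TR}$. A point of $S-D$ may have $d_{TI}\le d_{TR}$ (and then necessarily $d_{IR}>d_{TR}$); there you should switch to the symmetric family of circles centered at $R'$ (fixed $d_{IR}$) and use the quasi-convexity in $d_{TI}$, exactly as the paper does via the symmetry remark at the end of Appendix~B. With that dual family included, your local-improvement step covers all of $S-D$ and the argument is complete.
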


Eventually, based on the fact that a three-dimensional space can be split to the parallel planes fully, we extend above corollaries to the feasible space $\mathbb{S}$. Similar to the two-dimensional case, a special space is defined as the forbidden space, whose cross section created by plane S are the area D.  The following corollary is derived.
\begin{corollary}
For an arbitrary closed feasible space $\mathbb{S}$ in the whole three-dimensional space when $k=0$ or in  the three-dimensional space except the special space when $k>0$, the optimal position of the RIS is must on the surface of $\mathbb{S}$ $\footnote{Note that, due to the plane S is an arbitrary plane,  the feasible part on line l can be avoided by selecting the plane S suitablely.}$.
\end{corollary}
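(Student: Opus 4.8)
The plan is to bootstrap the three-dimensional statement from the two-dimensional Corollary~1 (and hence from Theorem~\ref{th1}) by a foliation argument. Fix an arbitrary family of parallel planes partitioning $\mathbb{R}^3$; it cuts $\mathbb{S}$ into closed planar slices $A_S:=\mathbb{S}\cap S$. When $k>0$, the hypothesis that $\mathbb{S}$ misses the special (forbidden) space is precisely the statement that each slice misses the area $D$, i.e. $A_S\subseteq S-D$, so Corollary~1 applies on every slice; when $k=0$ there is nothing to assume. Since $F_{object}$ depends on $\mathbf{r}_I$ only through $d_{TI}$ and $d_{IR}$, maximizing over $\mathbb{S}$ is the same as maximizing over each slice and then over slices, and Corollary~1 tells us that the maximizer on $A_S$ lies either on the relative boundary $\partial_S A_S$ (taken inside $S$) or on the feasible part of the line $l$ through the projections of $T$ and $R$ onto $S$.

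Two elementary facts then do most of the work. First, a relative-boundary point of a slice is a surface point of $\mathbb{S}$: if $\mathbf{r}_I\in\partial_S A_S$ then every $S$-neighbourhood of $\mathbf{r}_I$ meets $S\setminus\mathbb{S}\subseteq\mathbb{R}^3\setminus\mathbb{S}$, so $\mathbf{r}_I$ is a limit of points outside $\mathbb{S}$; as $\mathbb{S}$ is closed and $\mathbf{r}_I\in\mathbb{S}$, this gives $\mathbf{r}_I\in\partial\mathbb{S}$. Second, $F_{object}$ decays to $0$ at spatial infinity, so a global maximizer $\mathbf{r}_I^{\star}$ over the closed set $\mathbb{S}$ exists. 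It remains to exclude the case that $\mathbf{r}_I^{\star}$ is an interior point of $\mathbb{S}$ lying on some line $l$, and here we exploit that the slicing direction is still free. Assume $\mathbf{r}_I^{\star}\in\mathrm{int}\,\mathbb{S}$; feasibility (the need for a non-degenerate reflected path, encoded in $\mathbb{S}_0$ together with the orientation constraints) keeps $\mathbf{r}_I^{\star}$ off the line through $T$ and $R$, so the plane $P_0$ through $T$, $R$, $\mathbf{r}_I^{\star}$ is well defined. A short computation shows that $\mathbf{r}_I^{\star}$ lies on the line $l$ of a plane $S\ni\mathbf{r}_I^{\star}$ only when $S\perp P_0$; choosing any other plane $S^{\ast}\ni\mathbf{r}_I^{\star}$ (and not containing $T$ or $R$), the point $\mathbf{r}_I^{\star}$ lies in the relative interior of $A_{S^{\ast}}$ and off $l$, so by Corollary~1 it is not the maximizer on $A_{S^{\ast}}$, whence some $\mathbf{r}'\in A_{S^{\ast}}\subseteq\mathbb{S}$ strictly beats it — contradicting global optimality. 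Therefore $\mathbf{r}_I^{\star}$ lies on the surface of $\mathbb{S}$. (Equivalently, running the slicewise argument for two foliations with different normals confines an interior optimum to the line $TR$, which is excluded; this is the content of the footnote that the $l$-part can be dodged by a suitable choice of $S$.)

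The step I expect to be the main obstacle is the elimination of the line $l$: unlike $\partial_S A_S$, the $l$-part in Corollary~1 has nonempty relative interior, so it is not automatically contained in $\partial\mathbb{S}$, and ruling it out genuinely requires rotating the slicing plane, together with the bookkeeping that (for $k>0$) the rotated slice still avoids $D$ and that $\mathbf{r}_I^{\star}$ stays in the relative interior of its slice. Everything else — the foliation, the slicewise applicability of Corollary~1, and ``relative boundary of a slice $\subseteq$ surface of $\mathbb{S}$'' — is routine point-set topology.
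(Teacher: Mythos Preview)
Your approach is essentially the paper's own: the paper derives Corollary~2 in one line by observing that ``a three-dimensional space can be split to the parallel planes fully'' and then invoking Corollary~1 on each slice, with the footnote playing exactly the role of your plane-rotation step to dodge the line~$l$. Your write-up is considerably more careful (existence of the maximizer, the point-set argument that $\partial_S A_S\subseteq\partial\mathbb{S}$, the geometric identification of when $\mathbf{r}_I^{\star}\in l$, and the observation that the forbidden region $D=\{d_{TI}\le d_{TR}\}\cap\{d_{IR}\le d_{TR}\}$ is a plane-independent $3$D set so rotating $S$ does not reintroduce it), but the skeleton is identical.
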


\begin{remark}
Note that the area D diminishes as the $h_{1}$ and $h_{2}$ becomes either larger or smaller. For most cases, the area D can be neglected or does not exist. Accordingly, for three-dimensional cases,  the special space can be neglected or does not exist in most cases. Based on our proposed corollaries, the dimension of the area of interest  can be reduced. Thus the computational complexity of any numeral algorithms can be reduced.
\end{remark}

\section{Extensions and Discussions}

\subsection{Extensions to the Existing of Direct Link}
We consider adding a RIS to enhance the wireless communication in free space. Now, besides the RIS link, the direct link  also exists. The direct channel can be expressed as
\begin{equation}
\mathbf{h}_{TR}=a_{TR}\left[ e^{j\frac{2\pi d_{TR,1}}{\lambda}},  e^{j\frac{2\pi d_{TR,2}}{\lambda}}, \cdots, e^{j\frac{2\pi d_{TR,N}}{\lambda}} \right],
\end{equation}
with
\begin{equation}
a_{TR}=\frac{\sqrt{G_t G_r \lambda^2}}{4\pi}d_{TR}^{-1},
\end{equation}
where $d_{TR,p}$ is the distance from  antenna $p$ to the receiver, $p=1,...,N$. In this scenario, the received signal is a sum from two pathes, which is given by
\begin{equation}\label{powertwopath}
P_{r}=|(\mathbf{h}_{IR}^{H}\bm{\Theta}\mathbf{H}_{TI}^{H}+\mathbf{h}_{TR}^{H})\mathbf{v}|^{2}.
\end{equation}
The joint optimal Problem P1 is unchanged except replacing the objective function with $P_{r}^{'}$.  For this scenario,  we also propose a closed-form phase shifts as follows.
\begin{subequations}
\begin{align}\label{opphasetwopath}
\varphi_{q}^{\star}&\triangleq\frac{\pi}{2}(\frac{O}{|O|}-1)-2\pi\frac{d_{TI}+d_{IR}-d_{TR}+\Delta d^{T}_{I,q}+\Delta d^{R}_{I,q}}{\lambda} \nonumber\\
&=\frac{\pi}{2}(\frac{O}{|O|}-1)-2\pi\frac{1}{\lambda}(d_{TI}+d_{IR}-d_{TR})\nonumber\\
&~~+2\pi\frac{1}{\lambda}\big((\sin\theta_{t}\cos\varphi_{t}+\sin\theta_{r}\cos\varphi_{r})(m_{q}-\frac{M_{I}+1}{2})d_{x}  \nonumber\\
&~~~~~~~~~+(\sin\theta_{t}\sin\varphi_{t}+\sin\theta_{r}\sin\varphi_{r})(n_{q}-\frac{N_{I}+1}{2})d_{y}\big),
\end{align}
\begin{align}
\theta_{q}^{\star}=e^{j\varphi_{q}^{\star}},
\end{align}
\end{subequations}
where
\begin{align}\label{expressionO}
O=\frac{sinc(\frac{N\Delta d_{T}(\cos\mu_{TI}-\cos\mu_{TR})\pi}{\lambda})}{sinc(\frac{\Delta d_{T}(\cos\mu_{TI}-\cos\mu_{TR})\pi}{\lambda})}.
\end{align}
\begin{proof}
See Appendix C.
\end{proof}

Interestingly, compared to (\ref{opphasemain}), only the  term  $\frac{\pi}{2}(\frac{O}{|O|}-1)-2\pi\frac{1}{\lambda}(d_{TI}+d_{IR}-d_{TR})$ is polymeric in (\ref{opphasetwopath}). It' not hard to obtain the corresponding bemforming by using MRT, thus omitted here. With the solutions, the  received power is expressed as
\begin{align}\label{postiontopower1}
P_{r}=NL^{2}a_{TIR}^2P_{t}+Na_{TR}^2P_{t}+2NLa_{TR}a_{TIR}OP_{t}.
\end{align}

The derived results about the orientation and position of the RIS can also be extended to this scenario naturally.  It is obvious that the optimal orientation of RIS is in accordance with the counterpart in Section III-B. The aforementioned conclusions of the optimal position of RIS  has to be adjusted slightly.  The analytic process is similar to that shown in  Section IV. Differently, the plane S  fitting in this scenario can't be arbitrary anymore, and becomes a special plane where the ULA of the transmitter  is perpendicular to it. While analysing in the same manner as the derivation in Appendix B, it is found that only the results under fixed  $d_{TI}$  is available due to the the existing of $O$. Because when $d_{TI}$ is fixed,  $\mu_{TI}$ is fixed, resulting in a certain value of  $O$. By this way, the $F_{object}$ is consistent with the objective function of Problem $P2$.   However, the similar results don't exist anymore when $d_{IR}$ is fixed. Therefore, the theorems and corollaries in Section IV also hold by replacing  area $D~ (d_{TI}\leq d_{TR})\cap(d_{IR}\leq d_{TR}) $ with area $D1~(d_{TI}\leq d_{TR})$.  Besides,  for three-dimensional cases, a now space needs to be added  where the optimal position may lies in. That is the feasible part on the plane consisting of the ULA and line l.

\begin{table*}
\centering
\caption{The Physical and Electromagnetic Parameters of RIS-assisted Wireless Communication}
\begin{tabular}{|c|}
\hline
$P_{t}=0~ dBm, \lambda=0.0286~ m, G=9.03~ dB,   d_{x}=d_{y}=0.01~ m, L=100*100, \Gamma=1, k (in (\ref{DeF}))=3$\\
\hline
$ G_{t}=G_{r}=21~dB,  N=16,  \Delta d=\frac{\lambda}{2}$ \\
\hline
\end{tabular}
 \label{tabparra}
\end{table*}

\subsection{Extension to UPA  Cases}
We have assumed the transmit antenna  to be a ULA in system model. Actually, our work  can be transplanted to  uniform planar array  (UPA) case seamlessly as long as  the far-field condition is satisfied.   The corresponding adjustment is to reformulate the distance approximations  at the transmitter.  In more details, we  reformulate  $\Delta d^{I}_{T,p}$ in (\ref{farappro}) as
\begin{align}
\Delta d^{I}_{T,p}&=-\sin\theta^{T}\cos\varphi^{T}(m^{T}_{p}-\frac{M_{T}+1}{2})d^{T}_{x} \nonumber\\
&~~~-\sin\theta^{T}\sin\varphi^{T}(n^{T}_{p}-\frac{N_{T}+1}{2})d^{T}_{y},
\end{align}
where $M_{T}$ denotes the columns of the UPA and $m^{T}_{p}$ denotes the index number of columns of antenna $p$. $N_{T}$ denotes the rows of the UPA and $n^{T}_{p}$ denotes  the index number of rows of antenna $p$. $d^{T}_{x}\times d^{T}_{y}$ denotes the unit two-dimensional interval in the UPA. $\theta^{T}$ and $\varphi^{T}$ account for the  elevation angle and the azimuth angle from the transmit antenna to the RIS at the transmitter side, respectively. Note that it is similar to the approximations at the RIS since they are both planar array. With the adjustment, our analysis is unchanged for UPA cases.

\subsection{General Beamforming and Phase Shifts}
A main limitation of the aforementioned results for the joint optimization is that they are derived under far-field operation. In far-field operation,   the channel gain is obtained by \emph{far-field amplitude  approximation}  and  \emph{far-field phase approximation}.  These approximations may  be unreliable under near-field operation, when RIS is close to the transmitter or to the receiver or RIS has a large size. In order to extend  application scenarios, we develop the proposed beamforming and phase shifts to both near-field and far-field cases and propose SVD-based solutions. According to (\ref{Pv}),  the received power with optimal $\mathbf{v}$ is
\begin{align}\label{upperboundpower}
P_{r}&=\left\|\bm{\theta}^{T}\mathbf{H}_{TIR}^{H}\right\|^2P_{t}=\left\|\bm{\theta}^{T}\mathbf{U}\Sigma \mathbf{V}^{H}\right\|^2P_{t} \nonumber\\
&\overset{a}{\leq}L\sigma_{max}^{2}(\mathbf{H}_{TIR})\left\|\mathbf{v}_{1}\right\|=L\sigma_{max}^{2}(\mathbf{H}_{TIR}).
\end{align}
where $\mathbf{U}\Sigma \mathbf{V}^{H}$ is the SVD version of $\mathbf{H}_{TIR}^{H}$ and $\mathbf{v}_{1}$ is the first column of $\mathbf{V}$. For $(a)$, the equality holds when $\bm{\theta}=\sqrt{L}\mathbf{u}^{\ast}_{1}$, where $\mathbf{u}_{1}$ represents the first column of $\mathbf{U}$.  However, due to the unit-modulus restriction of phase shifts, we perform a  projection process to obtain a feasible solution. Our proposed phase shifts is given by
\begin{align}\label{generalshifts}
\bm{\theta}^{\star}=\angle{\mathbf{u}^{\ast}_{1}}.
\end{align}
Then, the corresponding beamforming is expressed as
\begin{align}\label{generalbemforming}
\mathbf{v}^{\star}=\frac{\sqrt{P_{t}}}{\sqrt{N}}\frac{(\angle{\mathbf{u}^{H}_{1}}\mathbf{H}_{TIR}^{H})}{\left\|\angle{{\mathbf{u}^{H}_{1}}}\mathbf{H}_{TIR}^{H}\right\|}.
\end{align}

\begin{remark}
As shown in Section III, the channels can be seen as a rank-one unit-modulus matrix in far-filed cases. When applying   (\ref{generalshifts}), the proposed closed-form solutions  for  far-field cases can be derived equivalently. The phase shifts in  (\ref{generalshifts}) is actually a general form of (\ref{opphase}). However, as the distances between communication parties become nearer or the RIS becomes larger, the inequality in (\ref{generalshifts}) is more relax, degrading the effectiveness. Simulation results reveal that  the proposed SVD-based solutions are effective  in  far-field operation and relative near-field operation (the distance is not close enough), but not in absolute near-field operation (the distance is very close).
\end{remark}

\begin{figure}
  \centering
  \includegraphics[width=0.5\textwidth]{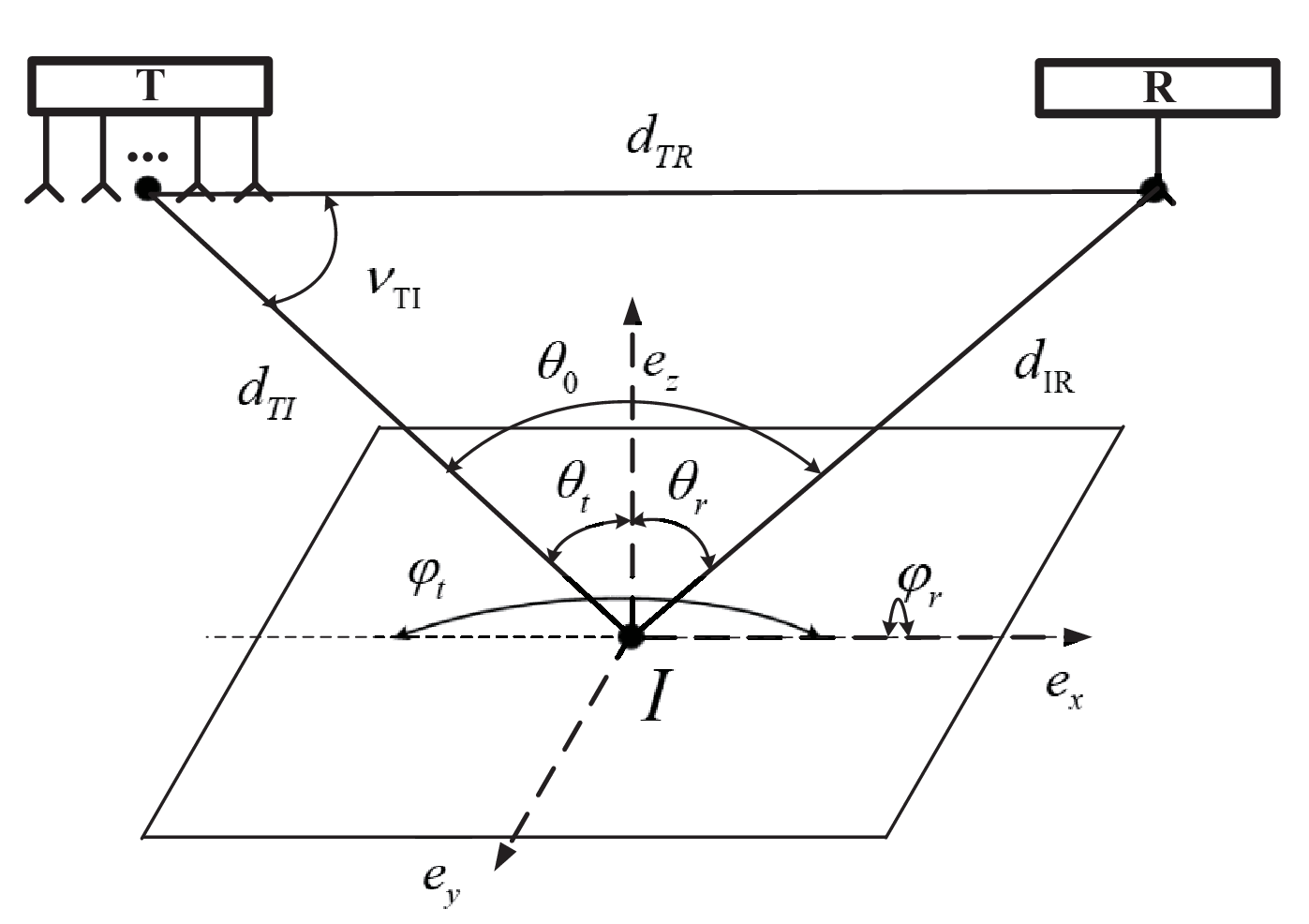}\\
  \caption{Supplementary of the geometric setup in Section VI-A. The line passing T and R and  $e_{x}$  are coplanar, the plane contains T,R and I is  perpendicular to $e_{y}$. The RIS is on the optimal orientation, which indicates that $\theta_{t}=\theta_{r}=\frac{\theta_{0}}{2}$. Besides, $\theta_{t}=\pi,~\theta_{r}=0$.}\label{setup1}
\end{figure}

\section{Simulations and Discussions}

Numerical simulations about RIS-assisted wireless communication in free space have been conducted.  This section is further divided
into three subsections. Subsection VI-A investigates the effectiveness of our proposed beamforming and phase shifts, compared with the received power upper bound as a performance benchmark.  Subsection VI-B studies the received power with the position of RIS on a plane under direct link either blocked or existed. Then we show the advantages of RIS to assist in free-space mmWave/THz communication.   Eventually, we verify the robustness of our proposed beamforming and phase shifts in Subsection VI-C. In default, the physical and electromagnetic parameters are listed in Table.\ref{tabparra}.

\subsection{The Effectiveness of the Proposed Beamforming and Phase Shifts}

We have proposed two approaches for beamforming and phase shifts, one is the closed-form solutions ((\ref{opphase}) and (\ref{opbeaming})), the other is SVD-based solutions ((\ref{generalshifts}) and (\ref{generalbemforming})). The effectiveness of them are demonstrated in Fig.~\ref{Powerscomparasion}. The geometric setups are illustrated in Figure.~\ref{setup1}. Moreover, let $d_{TR}=d_{TI}=d_{IR}=d$. Via scaling the value of $d$, both near-field and far-field operations can be included.   The transmit/receive antennas are omnidirectional here, thus $G_{t}=G_{r}=0~dB$. As seen, when the size of the reflective element is small as $d_{x}=d_{y}=0.01 m$,  both the two proposed approaches achieve the limit performance, equal to the upper bound (\ref{upperboundpower}). When the size of the reflective element is large as  $d_{x}=d_{y}=0.06 m$, the proposed SVD-based method  exceeds the proposed closed-form method, especially at short distances.   But the computational complexity of proposed SVD-based method is higher. It is also found that the proposed SVD-based solutions  still approach the upper bound (\ref{upperboundpower}). Note that, the distances shown in Fig.~\ref{Powerscomparasion}  are already in the near field of the RIS ( see \cite{9206044}). Therefore, our proposed approaches are effective in far-field operation and even in relative near-field operation (the distance is not close enough). For the absolute near-field operation (the distance is very close), the traditional way of calculating the total gain of reflective elements  may not be accuracy  (see \cite{9184098}), therefore we don't consider here.

\begin{figure*}
   \subfloat[\label{sub1}]{%
       \includegraphics[width=0.5\textwidth]{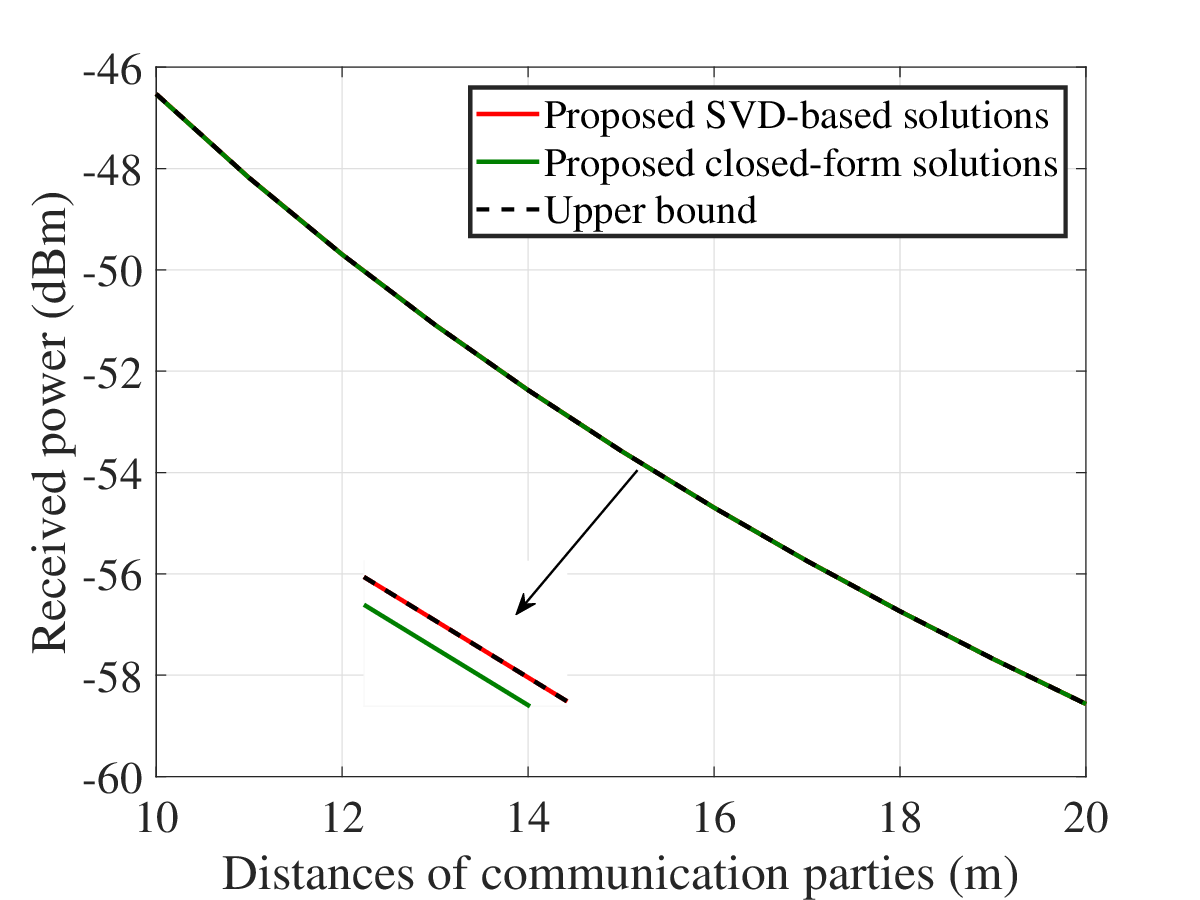}
     }
     \hfill
     \subfloat[\label{sub2}]{%
       \includegraphics[width=0.5\textwidth]{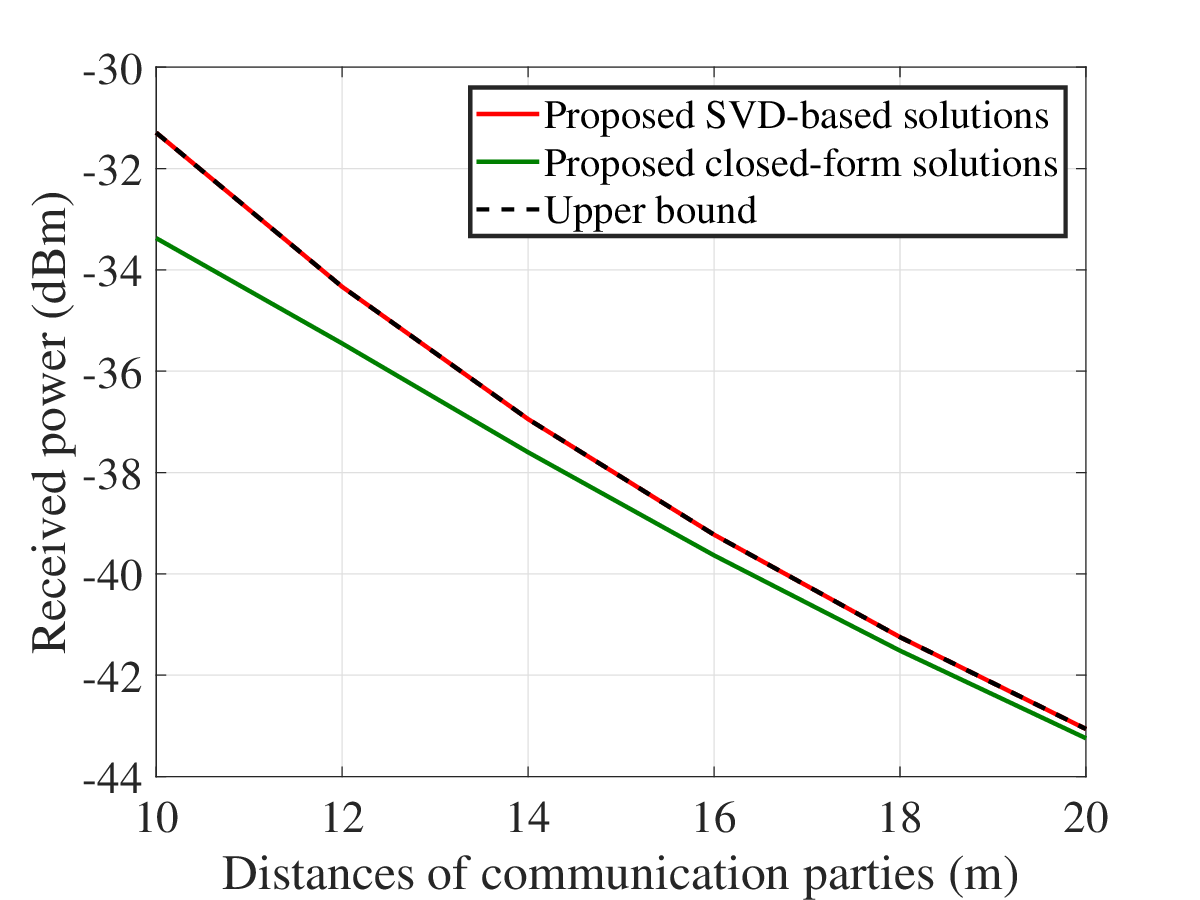}
     }
     \caption{The received power  versus the distances of communication parties. (a) $d_{x}=d_{y}=0.01~m$. (b) $d_{x}=d_{y}=0.06~m$.}
     \label{Powerscomparasion}
\end{figure*}

\subsection{The Performance of  RIS}
\begin{figure*}
   \subfloat[ \label{h80}.]{%
       \includegraphics[width=0.5\textwidth]{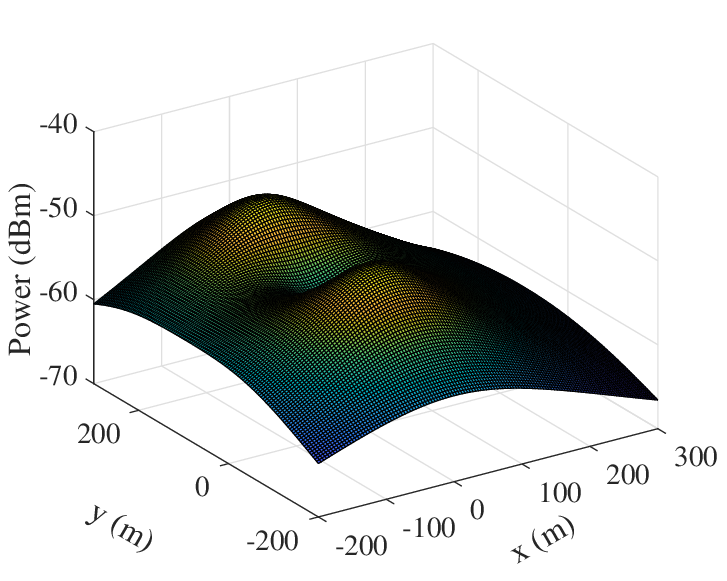}
     }
     \hfill
     \subfloat[\label{h40}.]{%
       \includegraphics[width=0.5\textwidth]{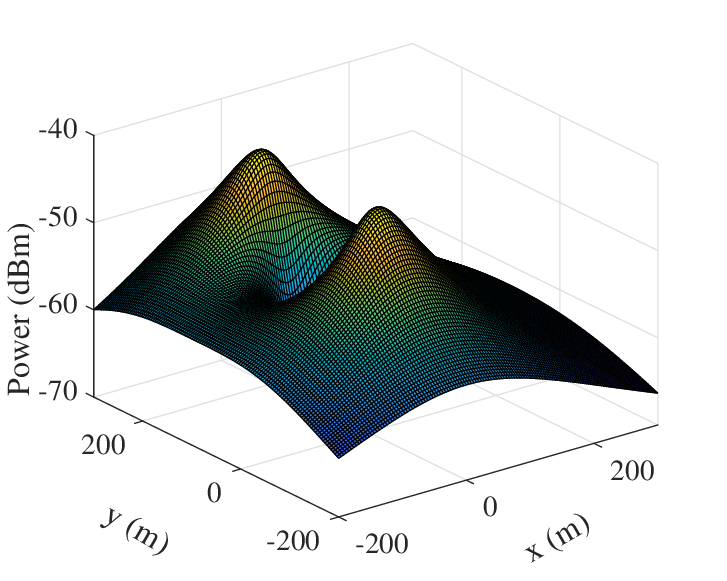}
     }
     \caption{The received power versus the position of the RIS on plane S with only RIS link. (a) $h=80~m$. (b) $h=40~m$. }
     \label{SiPlaneOnePath}
\end{figure*}
\begin{figure*}
   \subfloat[ \label{h80}.]{%
       \includegraphics[width=0.5\textwidth]{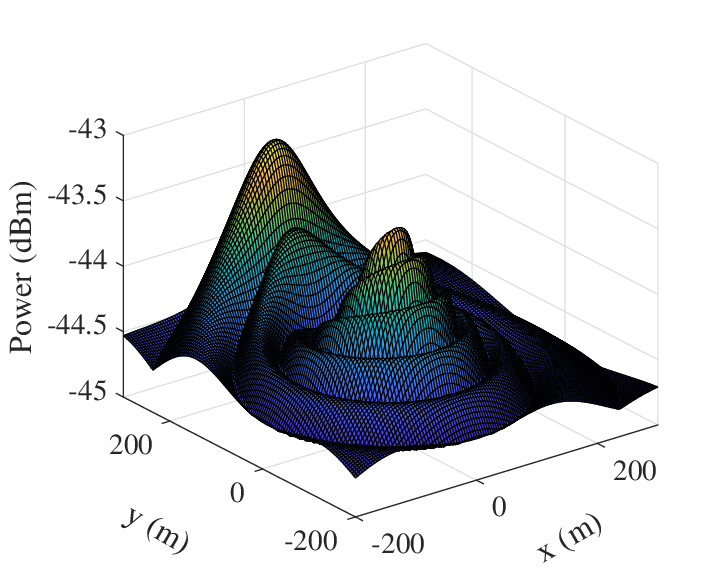}
     }
     \hfill
     \subfloat[\label{h40}.]{%
       \includegraphics[width=0.5\textwidth]{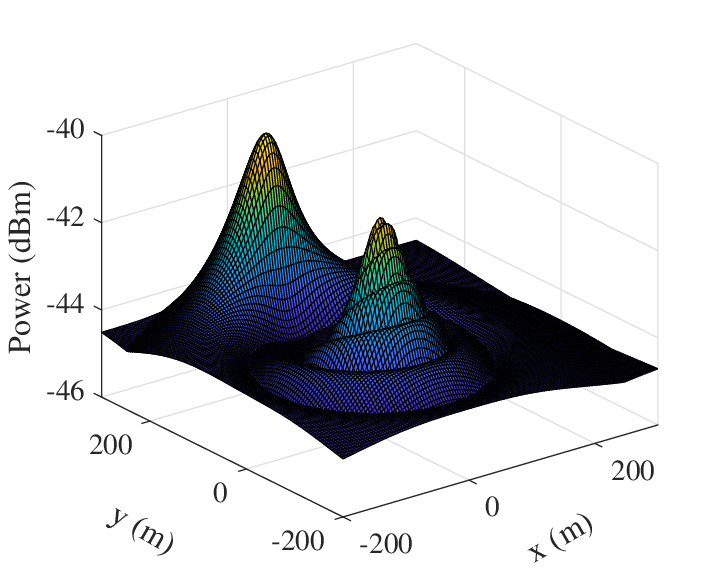}
     }
     \caption{The received power versus the position of the RIS on plane S with both RIS link and direct link. (a) $h=80~m$. (b) $h=40~m$. }
     \label{SiPlaneTwoPath}
\end{figure*}

We investigate the performance of RIS by placing the reference point of RIS on a two-dimensional plane S. For a fixed RIS, the optimal strategy, including the proposed beamforming and phase shifts, the optimal orientation of RIS, is applied\footnote{It is worth mention that with the optimal orientation, the RIS and plane S are not coplanar.}.   The geometric setup can be referred in Fig.~\ref{planeS}.  Moreover, we set $d_{TR}=200~m$ and the direction of ULA is  perpendicular to  plane S (for convenience of simulation but not necessary). The coordinate origins of plane S is selected as $T^{'}$. The  positive direction of X-axis is selected as $l_{\overrightarrow{T'R'}}$ and the Y-axis is determined  correspondingly on plane S. Without loss of generality, we let $h1=h2=h$.

Fig.~\ref{SiPlaneOnePath} reveals the received power versus the position of RIS  on plane S via traversal grid  when the direct link is blocked. As seen the optimal solution is near $T^{'}$ or $R^{'}$  and on line l, which is consistent  with our analysis. Similarly, we investigate the received power for existing  the direct link  in   Fig.~\ref{SiPlaneTwoPath}. As seen, there are many ripples due to the balance of the RIS link and direct link to maximize the received power (\ref{postiontopower1}). In more details, they are determined by the antennas structure and DOAs of the RIS/receiver at the transmitter, which is given by (\ref{expressionO}).  We also find that the amount of the ripples is half of the number of antennas at the transmitter. It is observed that  the optimal solution is close to $R^{'}$, and on line l, which is consistent with  with our analysis. Moreover, the power at $R^{'}$ is obvious larger than  that at $T^{'}$ due to the existence of direct link.

Fig.~\ref{wavelength} illustrates the optimal received power versus the  carrier wavelength.   Moreover, we let $h = 80~m$ and fix the RIS at $R^{'}$ on plane S. Note that the  RIS in this figure satisfies the proposed \emph{anti-decay designing  principle} where $\frac{d_{x}}{ \lambda}=\frac{d_{y}}{ \lambda}=\frac{1}{3}$. It is  observed that, as the wavelength decreases, the optimal received power from direct link decays largely. But  with  the help of RIS link, the  performance degradation is alleviated. Therefore, the RIS has absolute advantages to assist in free-space mmWave/THz communication.

\subsection{The Robustness of Our Proposed Beamforming and Phase shifts}
In practice, the obtained position of RIS may be obtained imperfectly due to the error of measurement or the mobility of the RIS. Therefore we demonstrates the robustness of our proposed solutions of beamforming and phase shifts in Fig.~\ref{robustness}. The geometric setup is same to the last subsection, and the direct link is blocked. Assuming the known position of the RIS is $T^{'}$ (origin). According to it,  the closed-form beamforming and phase shifts are derived, named the estimated solutions. Fig.~\ref{robustness} illustrates the normalized power deviation using estimated solutions  versus the practical position of  the RIS on plane S. The normalized power deviation is used to quantity the deviation of the received power, which is given by
\begin{align}
\hat{P}_{r}=\frac{\|\tilde{P}_{r}-\dot{P}_{r}\|}{\max\{\tilde{P}_{r},\dot{P}_{r}\}},
\end{align}
where $\tilde{P}_{r}$ denotes the received power using estimated solutions and $\dot{P}_{r}$ denotes the ideal optimal received power. It is found that the area for $\hat{P}_{r}<0.1$ is large than $5~m\times 5~m$. Therefore, our proposed solutions are strongly robust to  the position perturbation of the RIS. Note that the performance of the proposed SVD-based solutions is consistent with the proposed closed-form solutions in the far-field operation, thus omitted here.

\begin{figure}
  \centering
  \includegraphics[width=0.5\textwidth]{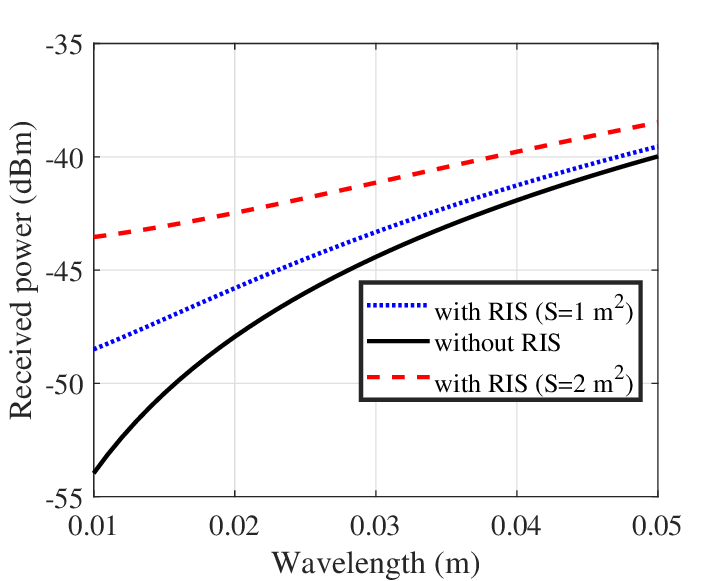}\\
  \caption{The optimal received power versus the carrier wave length.}\label{wavelength}
\end{figure}

\begin{figure*}
  \subfloat[ \label{h80}.]{%
       \includegraphics[width=0.5\textwidth]{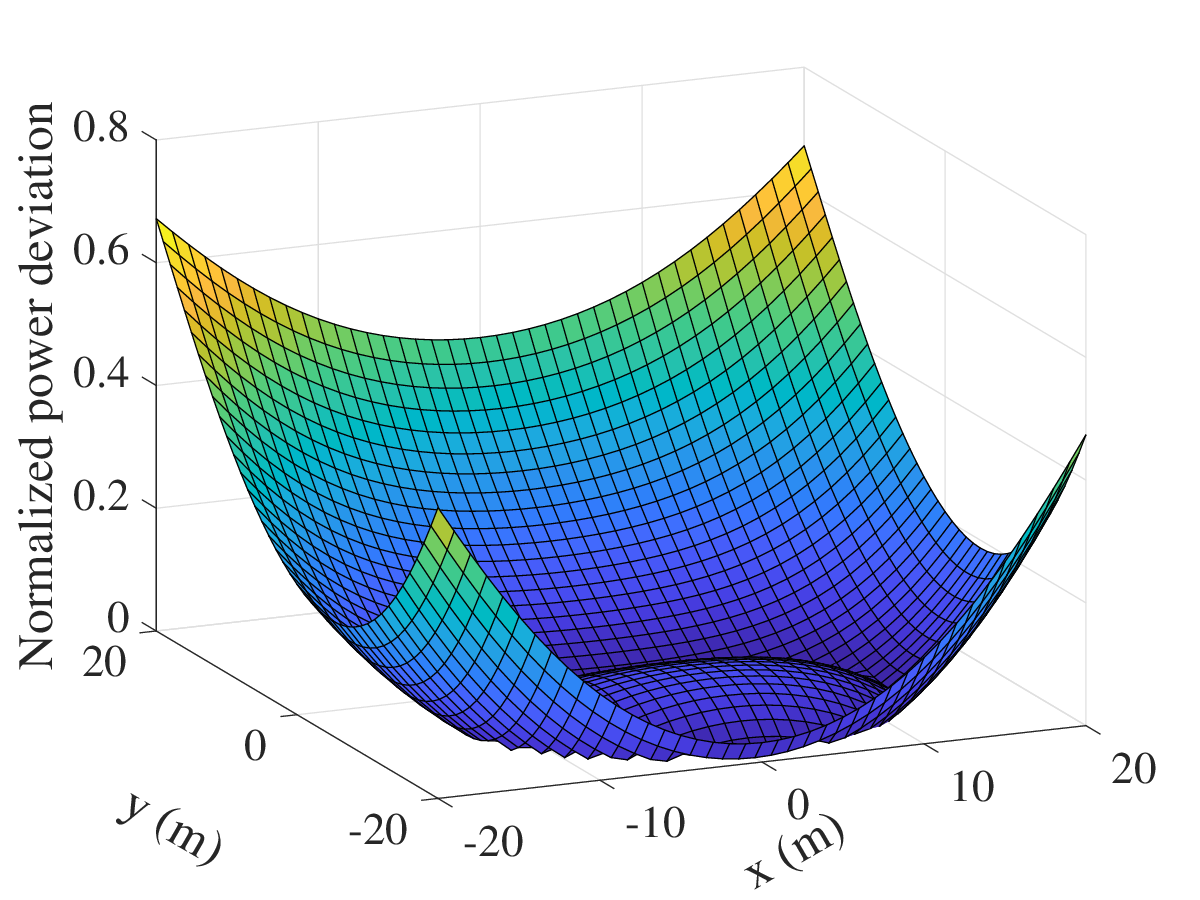}
     }
     \hfill
     \subfloat[\label{h40}.]{%
       \includegraphics[width=0.5\textwidth]{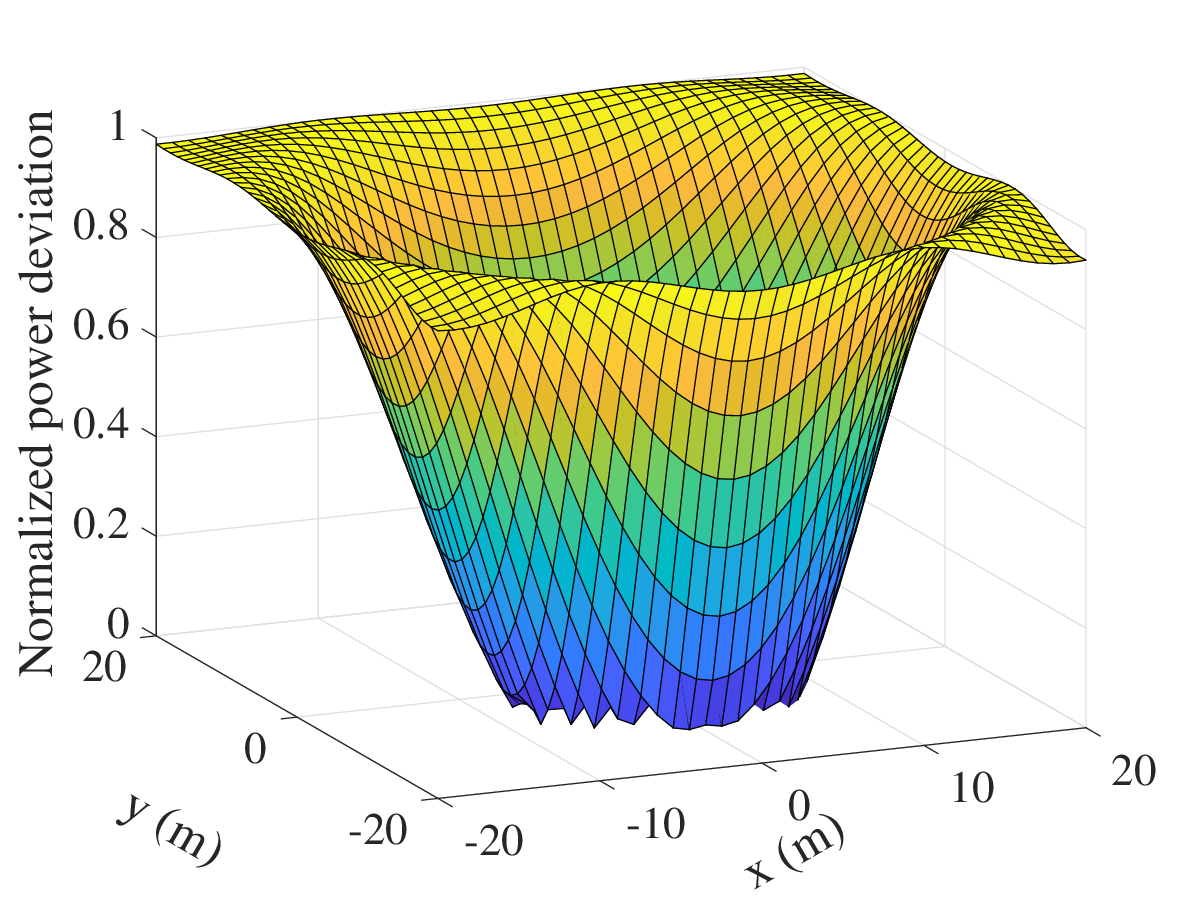}
     }
     \caption{The normalized received power deviation  versus the  position perturbation  of  RIS on plane S. (a) $h=80~m$. (b) $h=40~m$. }
     \label{robustness}
\end{figure*}

\section{Conclusion}
In this paper, comprehensive optimization of incorporating a RIS to MISO wireless communication  in free space has been considered from electromagnetic and physical  perspectives. The closed-form solutions of transmitter's beamforming and phase shifts have been proposed and extended. Considering the general power radiation pattern, we have proved  that the optimal orientation  of  the RIS is just to satisfy specular reflection. United with the above contributions, the position optimizing problem  of placing a RIS  has been studied. For most three-dimensional space to place the RIS, a substantial dimensionality reduction theory was provided. In simulation part, the proposed closed-form solutions of beamforming and phase shifts approach the power upper bound. Besides, the robustness in terms of position perturbation is verified. The simulation results indicate that adding a RIS is remarkable in mmWave/THz communication.
\appendices

\section{Proof of Theorem. \ref{th1} ($k=0$)}
For any fixed $\nu_{TI}$, the orbit of feasible position is a circle $C(\nu_{TI})$ on plane S.  If $\nu_{TI}$ is fixed, the value of $d_{TI}$ is also certain. $F_{object}(d_{IR})$ decreases from $p_{1}(\nu_{TI})$ to $p_{2}(\mu_{TI})$ along $C(\nu_{TI})$. $F_{object}(d_{IR})$ is a decreasing function to $d_{IR}$, the max value of $F_{object}$ arrives  at point $p_{1}(\nu_{TI})$. Therefore, for an arbitrarily  position on the plane S, there must exists a position on the half-line $l_{\overrightarrow{T'R'}}$ with the same $\nu_{TI}$, at which the value of the objective function is equal or bigger. As a conclusion,  the optimal position is must on the half-line $l_{\overrightarrow{T'R'}}$. {With the same manner, when starting from fixing $\nu_{RI}$, we obtain a conclusion that  the optimal position is must on the half of the half-line $l_{\overrightarrow{R'T'}}$. Therefore, on plane S, the optimal position of  RIS is must on  line segment $l_{R'T'}$.

\section{Proof of Theorem. \ref{th1} ($k>0$)}
Necessarily, we'll exploit the quasi-convex property of $F_{object}$. For convenient expression, we simplify $d_{IR}$ as $x$ and denote $\sqrt{F_{object}(x)}$ as $F$. To maximize the value of $F_{object}$ is equal to maximize the value of $F$. Then the objective function $F(x)$ can be written as
\begin{align}\label{F_1}
F(x)=x^{-1}(\underbrace{ax^{-1}+bx+\frac{1}{2}}_{f(x)})^{\frac{k}{2}}~~~ x>0.
\end{align}
Wherein, the constant $a=\frac{d_{TI}^2-d_{TR}^2}{4d_{TI}}, b=\frac{1}{4d_{TI}}>0,k>0$ and $0<f(x)<1$, deducing from (\ref{OpFF}).

The derivation of $F(x)$, denoted as $F^{'}(x)$ is given by
\begin{align}\label{D_F_1}
F^{'}(x)=-f(x)^{\frac{k}{2}}x^{-2}+\frac{k}{2}x^{-1}f(x)^{(\frac{k}{2}-1)}f^{'}(x),
\end{align}
where,
\begin{align}\label{order-one}
f^{'}(x)=(-ax^{-2}+b).
\end{align}

\begin{IEEEproof}[proof ($0<k\leq2$)]
\begin{align}
F^{'}(x)&=[f(x)^{(\frac{k}{2}-1)}x^{-2}]\underbrace{(-f(x)+\frac{k}{2}xf^{'}(x))}_{g(x)}.
\end{align}
Because $0<f(x)$, $g(x)$ determines whether $F^{'}(x)$ is positive or negative.
\begin{align}
g(x)&=-(ax^{-1}+bx+\frac{1}{2})+\frac{k}{2}(-ax^{-1}+bx) \nonumber\\
&=-(\frac{k}{2}+1)ax^{-1}+(\frac{k}{2}-1)bx-\frac{1}{2}<0.
\end{align}
So, $F(x)$ is a quasi-convex (quasilinear) function for $a>0,~0<k\leq2$.
\end{IEEEproof}

\begin{IEEEproof}[proof ($k\geq2$)]
The condition for $F^{'}(x)=0$ is
\begin{align}\label{eq1}
f(x)=\frac{k}{2}xf^{'}(x).
\end{align}
If the solution doesn't exist, then $F(x)$ is a quasi-convex (quasi-linear) function. If it exists, since $f(x)>0$, $(\ref{eq1})$ implies $f^{'}(x)>0$.  The second derivative of $F(x)$ is given by
\begin{align}
F^{''}(x)&=(x^{-3}f(x)^{\frac{k}{2}-2}) \nonumber\\
&~~~(-\frac{k}{2}xf(x)f^{'}(x)+2f(x)^{2}-\frac{k}{2}xf(x)f^{'}(x) \nonumber\\ &~~~~+\frac{k}{2}(\frac{k}{2}-1)x^2f^{'}(x)^2+\frac{k}{2}x^2f(x)\underbrace{2ax^{-3}}_{f^{''}(x)})  \nonumber\\
&\triangleq (x^{-3}f(x)^{\frac{k}{2}-2})h(x),
\end{align}
Due to $f(x)>0$, whether $F^{''}(x)$ is negative or positive is determined by $h(x)$. The expression of $h(x)$ is further expressed as
\begin{align}
h(x)&=f(x)\underbrace{[-\frac{k}{2}xf^{'}(x)+f(x)]}_{Q}+f(x)^2\nonumber\\
&~~-\frac{k}{2}xf^{'}(x)\underbrace{[f(x)-\frac{k}{2}xf^{'}(x)]}_{Q} \nonumber\\
&~~-\frac{k}{2}x^2f^{'}(x)^2+\frac{k}{2}x^2f(x)f^{''}(x) \nonumber\\
&\overset{a}{=}f(x)^2-\frac{k}{2}x^2f^{'}(x)^2+\frac{k}{2}x^2f(x)f^{''}(x)  \nonumber\\
&\overset{b}{=}(\frac{k^2}{4}-\frac{k}{2})x^2f^{'}(x)^2+kax^{-1}f(x)>0,
\end{align}
wherein, $(a)$ is due to $Q=0$ when (\ref{eq1}) holds. $(b)$ results from the substitution of (\ref{eq1}) and the expression of $f^{''}(x)$.
Therefore, $F(x)$ is a quasi-convex function for $a>0, ~k>2$.
\end{IEEEproof}
As a conclusion, when $d_{TI}>d_{TR}$, $F_{object}(d_{IR})$ is a quasi-convex function for a fixed $d_{TI}$. Due to the symmetry  of  $d_{TI}$ and $d_{IR}$ in the function $F_{object}$, it also holds  that when  $d_{IR}>d_{TR}$, $F_{object}(d_{TI})$ is a quasi-convex function for a fixed $d_{IR}$.
Let  D1 account for the area where $d_{TI}\leq d_{TR}$ and D2 account for the area where $d_{IR}\leq d_{TR}$. The union set of D1 and D2 is denoted as D.

Eventually, based on the basic property of  quasi-convex function (Section $3.4.2$ in \cite{Boyd2004Convex}), extending Appendix A, we derive a conclusion for $k>0$.  For $k>0$, on the area S-D, the optimal position is must on the line l. Note that it is the line l, not the line segment $l_{R'T'}$ due to the discrepancy of monotonicity and quasi-convex property.

\section{Optimal Phase Shifts for Existing Direct Link}
We already know the MRT is the optimal  method to design beamforming for point to point communication. After applying MRT, the received power can be expressed as
\begin{align}\label{PowerforTheta}
P_{r}&=\left\|a_{TIR}e^{j2\pi\frac{d_{TI}+d_{IR}}{\lambda}}\bm{\theta}^{T}\mathbf{d}\mathbf{b}^{T}+a_{TR}e^{j2\pi\frac{d_{TR}}{\lambda}}\mathbf{e}^{T}\right\|^2P_{t} \nonumber\\
&=\left\|a_{TIR}\underbrace{e^{j2\pi\frac{d_{TI}+d_{IR}-d_{TR}}{\lambda}}\bm{\theta}^{T}\mathbf{d}}_{A'}\mathbf{b}^{H}+a_{TR}\mathbf{e}^{T}\right\|^{2}P_{t},
\end{align}
where,
\begin{subequations}
\begin{align}
\mathbf{e}^{T}=\left[e^{j\frac2\pi\frac{\Delta d^{R}_{T,1}}{\lambda}}, e^{j2\pi\frac{\Delta d^{R}_{T,2}}{\lambda}},\cdots, e^{j2\pi\frac{\Delta d^{R}_{T,N}}{\lambda}},\right],
\end{align}
\begin{align}
\Delta d^{R}_{T,p}=(\frac{(N+1)}{2}-p)\cos{\mu_{TR}}\Delta d_{T}.
\end{align}
\end{subequations}

We equivalently represent  $A'$ as $Me^{jx}$, which is given by
\begin{align}\label{A}
Me^{jx}=\sum_{q=1}^L\underbrace{e^{j(\varphi_{q}+2\pi\frac{(d_{IR}+d_{TI}-d_{TR}+\Delta d^{T}_{I,q}+\Delta d^{R}_{I,q})}{\lambda})}}_{A'_{q}}.
\end{align}
It indicates that,  via changing $\varphi_{q}$, the amplitude $M$ can be an arbitrary value in the feasible set $[0, L]$ and the amplitude $x$ can be an arbitrary value in the feasible set $[0,2\pi]$.

Substituting (\ref{A}) into (\ref{PowerforTheta}) with new formulation of A', we obtain
\begin{align}
P_{r}&=\sum_{p=1}^{N}|a_{TIR}ME_{1}(p)+a_{TR}E_{2}(p)|^2P_{t} \nonumber\\
&=NM^2a_{TIR}^2P_{t}+Na_{TR}^2P_{t} \nonumber\\
&~+2Ma_{TIIR}a_{TR}\underbrace{\sum_{p=1}^{N}\cos{(\angle{E_{1}(p)}-\angle{E_{2}(p)})}}_{E_{3}}P_{t},
\end{align}
where,
\begin{subequations}
\begin{equation}
E_{1}(p)=e^{jx+j2\pi(\frac{(N+1)}{2}-p)\cos{\mu_{TI}}\frac{\Delta d_{T}}{\lambda}},
\end{equation}
\begin{equation}
E_{2}(p)=e^{j2\pi(\frac{(N+1)}{2}-p)\cos{\mu_{TR}}\frac{\Delta d_{T}}{\lambda}},
\end{equation}
\begin{equation}
\begin{aligned}
&~~~~~~~~~~~~~~~~~~~E_{3}=\sum_{p=1}^{N}\cos{(x-K_{p}}), \\
&K_{p}\triangleq2\pi(p-\frac{(N+1)}{2})(\cos{\mu_{TI}}-\cos{\mu_{TR}})\frac{\Delta d_{T}}{\lambda}.
\end{aligned}
\end{equation}
\end{subequations}

We deduce the expression of $E_{3}$ further as follows.
\begin{align}
E_{3} & =\cos(x)\sum_{p=1}^N\cos(K_p)+\sin(x)\sum_{p=1}^N\sin(K_p) \nonumber\\
& \overset{a}{=}\cos(x)\sum_{p=1}^N\cos(K_p)+j\cos(x)\sum_{p=1}^N\sin(K_p) \nonumber\\
& =\cos(x)\sum_{p=1}^N e^{jK_p} \nonumber\\
& \overset{b}{=}N\cos(x)\underbrace{\frac{\mathrm{sinc}(\frac{N \Delta d_{T}(\cos\mu_{TI}-\cos \mu_{TR})\pi}{\lambda})}{\mathrm{sinc}(\frac{\Delta d_{T}(\cos \mu_{TI}-\cos\mu_{TR})\pi}{\lambda})}}_{O},
\end{align}
where $(a)$  is due to $\sum_{p=1}^N\sin(K_p)=0$ and $(b)$ results from $e^{jK_p} (p=1:N)$ is a geometric progression. It is verified that the conditions of maximizing $P_{r}$ are  $M=L$ and $\cos(x)= \frac{O}{|O|}$. The variable $A'$ can achieve this requirement if and only if
\begin{align}
\varphi_{q}^{\star}&\triangleq\frac{\pi}{2}(\frac{O}{|O|}-1)-2\pi\frac{d_{TI}+d_{IR}-d_{TR}+\Delta d^{T}_{I,q}+\Delta d^{R}_{I,q}}{\lambda} \nonumber\\
&=\frac{\pi}{2}(\frac{O}{|O|}-1)-2\pi\frac{1}{\lambda}(d_{TI}+d_{IR}-d_{TR})\nonumber\\
&~+2\pi\frac{1}{\lambda}\big((\sin\theta_{t}\cos\varphi_{t}+\sin\theta_{r}\cos\varphi_{r})(m_{q}-\frac{M_{I}+1}{2})d_{x}  \nonumber\\
&~~~~~~~~+(\sin\theta_{t}\sin\varphi_{t}+\sin\theta_{r}\sin\varphi_{r})(n_{q}-\frac{N_{I}+1}{2})d_{y}\big).
\end{align}

\ifCLASSOPTIONcaptionsoff
  \newpage
\fi

\bibliographystyle{IEEEtran}
\bibliography{IEEEfull,cite}

\end{document}